\newtheorem{theorem}{Theorem}
\theoremstyle{definition}
\def\E{\mathbb{E}}
\def\Pr{\mathbb{P}}
\def\deg{\operatorname{deg}}
\def\Sq{\operatorname{Sq}}
\def\Cir{\operatorname{Cir}}
\def\SI{the Appendix}
\def\thmminusedge{Theorem~1}
\def\thmbipartite{Theorem~2}
\def\thmoned{Theorem~3}
\def\pan{P}
\def\ttop{TT}
\def\lol{LP}
\def\fii{\varphi}
\def\fp{\rho}
\def\fpx{\fp^\star}
\def\fponed{\fp^{\operatorname{1D}}_N}
\def\fptwod{ \fp^{\operatorname{2D}}_N}
\def\rtwod{ r^{\operatorname{2D}}_N}
\def\fpknr{\fp(K_N;r)}
\def\Mhigh{M^\uparrow}
\def\Mlow{M^\downarrow}
\begin{document}

\author[1]{Josef Tkadlec}
\author[2]{Kamran Kaveh} 
\author[3]{Krishnendu Chatterjee} 
\author[1,4]{Martin A. Nowak}

\affil[1]{Department of Mathematics, Harvard University, Cambridge, MA 02138, USA}
\affil[2]{Department of Applied Mathematics, University of Washington, Seattle, WA 98195, USA}
\affil[3]{Institute of Science and Technology Austria, Am Campus 1, 3400 Klosterneuburg, Austria}
\affil[4]{Department of Organismic and Evolutionary Biology, Harvard University, Cambridge, MA 02138, USA}
\affil[$\star$]{To whom correspondence should be addressed: tkadlec@math.harvard.edu}

\title{Natural selection of mutants that modify population structure}
\date{}

\maketitle

\begin{abstract}
Evolution occurs in populations of reproducing individuals.
It is well known that population structure can affect evolutionary dynamics.
Traditionally, natural selection is studied between mutants that differ in reproductive rate,
but are subject to the same population structure.
Here we study how natural selection acts on mutants that have the same reproductive rate,
but experience different population structures.
In our framework, mutation alters population structure, which is given by a graph that
specifies the dispersal of offspring.
Reproduction can be either genetic or cultural.
Competing mutants disperse their offspring on different graphs.
A more connected graph implies higher motility. 
We show that enhanced motility tends to increase an invader’s fixation probability, but there are interesting exceptions.
For island models, we show that the magnitude of the effect depends crucially on the exact layout of the additional links.
Finally, we show that for low-dimensional lattices, the effect of altered motility is comparable to that of altered fitness:
in the limit of large population size, the invader's fixation probability is
either constant or exponentially small, depending on whether it is more or less motile than the resident.
\end{abstract}

\section*{Introduction}

Evolutionary dynamics is the study of how different traits arise and disappear in a population of reproducing individuals.
Each trait might confer a fitness advantage (or disadvantage) on its bearer,
thus in turn altering the probability that the trait spreads through the population (an event called \textit{fixation}) or disappears (\textit{extinction}).
Besides the fitness advantage, another important factor in determining the fate of a trait over time (its fixation or extinction)
is the spatial structure of the population \cite{durrett2008probability,nowak2006evolutionary,broom2014game,moran,nagylaki1992introduction}.
For instance, the population might be subdivided into ``islands'': 
An offspring of a reproducing individual then typically stays in the same island, but occasionally it migrates to some nearby island.
The fixation probability of a trait then crucially depends on the dispersal pattern, that is, the migration rates among the islands.
Incorporation of population structure into a model of selection dynamics substantially improves the descriptive power of the model ~\cite{durrett2008probability,nagylaki1992introduction,pollak1966survival,nagylaki1980strong,whitlock1997effective,durrett1994importance,komarova2006spatial,santos2006evolutionary}.

Evolutionary graph theory is a powerful framework for studying natural selection in population structures with arbitrarily complex dispersal patterns~\cite{lieberman2005evolutionary,antal2006evolutionary,broom2008analysis,diaz2014approximating,adlam2015amplifiers,monk2018martingales,allen2017evolutionary}.
On an evolutionary graph (network), individuals occupy the nodes (vertices), and the edges (links) specify where the offspring can migrate.
Graphs can represent spatial structures, contact networks in epidemiology, social networks, and phenotypic or genotypic structures in biological populations~\cite{lieberman2005evolutionary, santos2005scale, keeling2011modeling, szabo2007evolutionary, castellano2009statistical,perc2013evolutionary}.
The question is then: How does a graph structure affect the fixation probability of a new mutant introduced into a background population of residents?
Extensive research over the past decade has produced many remarkable population structures with various desirable properties~\cite{broom2011stars,mertzios2013natural,Galanis17,tkadlec2019population,allen2021fixation}.
As one example, consider a mutation that increases the reproduction rate of the affected individual.
Population structures that increase the fixation probability of such mutations, as compared to the baseline case of unstructured (well-mixed) populations, are known as amplifiers of selection.
Many amplifiers of selection are known, both simple ones and strong ones~\cite{monk2014martingales,pavlogiannis2018construction,Goldberg19,tkadlec2021fast}.

In this work, we consider mutations that do not change the reproductive rate of the affected individual, but rather its motility potential.
In nature, an altered motility potential could arise in a variety of scenarios.
We give three examples.

First, consider a species occupying a region that is split by a geographical barrier into two parts.
If the mutation allows the offspring to successfully cross the barrier, the 
mutants will perceive the population structure as being close to well-mixed, whereas
the residents will continue perceiving it as being split into two parts (islands).

As a second example, consider structured multicellular organisms.
There, cells are arranged in symmetric lattice structures known as epithelia.
An epithelial tissue may be described as a two-dimensional sheet defined by vertex points representing wall junctions,
one-dimensional edges representing cell walls, and two-dimensional faces representing cells.
The form of this tissue network is determined by the extracellular matrix (ECM).
The ECM is a network consisting of
extracellular macromolecules, collagen, and enzymes that provide structural and biochemical support to surrounding cells.
The composition of ECM varies between multicellular structures \cite{frantz2010extracellular,hay2013cell,walker2018role,gibson2009cell,kachalo2015mechanical}.
Thus, when discussing somatic evolution in multicellular organisms,
the invading genotype might differ in what network structure it is forming \cite{radisky2002order,walker2018role}.
In other words, each type, in the absence of the other type, forms its own and different extracellular matrix.
This leads to different alignment of cells and thus a new population structure, see~\cref{fig:tissue_1}.

\begin{figure}
\begin{center}
\includegraphics[scale=1]{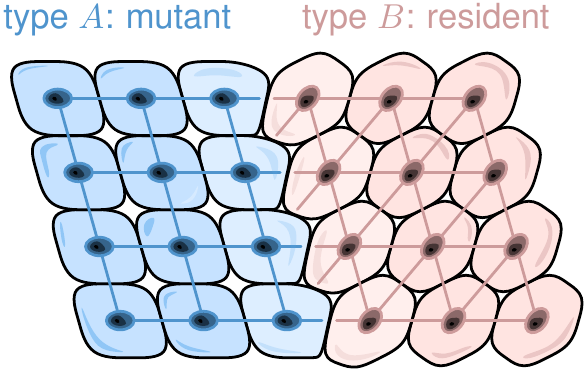}
\end{center}
\caption{In epithelial tissues, different cell types align along different lattice-like structures.}
\label{fig:tissue_1}
\end{figure}

Carcinoma is yet another example of how the tissue organization of the invader and resident type can differ from each other.
In this case, tumor cells normally have a highly disorganized neighborhood structure, due to
the variability in cell-cell adhesion and the lack of proper epithelial programs among tumor cells in the tumor microenvironment \cite{nelson2006extracellular,brauchle2018biomechanical}.
Normal epithelial cells, on the other hand, typically follow symmetric geometric lattice patterns.
This change in structure between an invading trait and the resident type
can have substantial consequences on the outcome of the evolutionary process.
However, in the context of evolutionary graph models, such considerations have not yet received appropriate attention. 

In order to model differences in the motility potential within the framework of evolutionary graph theory,
we represent the population structure as two graphs $G^A$, $G^B$
overlaid on top of each other on the same set of nodes~\cite{spirakis2021extension}.
The two graphs $G^A$, $G^B$ represent the dispersal patterns for the mutants and residents, respectively.
In other words, mutant offspring migrate along the edges of $G^A$,
whereas resident offspring migrate along the edges of $G^B$.
We study the fixation probability $\fp(G^A,G^B)$ of a single neutral mutant who
appears at a random node and perceives the population structure as $G^A$,
as it attempts to invade a population of residents who perceive the population through $G^B$.

There is a large body of literature on the evolution and ecology of migration and dispersal~\cite{comins1980evolutionarily,dieckmann1999evolutionary,hutson2003evolution,levin2003ecology,ronce2007does},
especially for population structures formed by islands (also called patches, demes, or metapopulations)~\cite{may1994superinfection,olivieri1995metapopulation,heino2001evolution}.
Our framework is a generalization of this approach in the same way that evolutionary graph theory is a generalization of the
vast literature on evolution and ecology in spatially structured populations~\cite{lieberman2005evolutionary,durrett1994importance}.
The framework is flexible, allowing us to study both simple and arbitrarily complex population structures of any population size.
As such, it facilitates a discovery of new phenomena.

Among the graph-theoretical approaches, other ways to model motility and dispersal have been suggested in the literature.
They allow for the offsprings to disperse in more complex forms and reach locations that are not directly connected to the mother location.
This introduces migration potential as an independent quantity relative to the proliferation potential of the types \cite{ohtsuki2007evolutionary,thalhauser2010selection,manem2014spatial,krieger2017effects,herrerias2019motion,waclaw2015spatial,manem2015modeling}.
In those cases, the motility potential is representative of a random motion and it is typically decoupled from the reproduction events.
Such random motility and motion has an anti-synergistic relationship with the proliferation potential.
In other words, if invaders are more motile, their fixation probability tends to decrease \cite{thalhauser2010selection,manem2014spatial,krieger2017effects}.

Here we show that, in contrast to random motility, enhanced structured motility generally leads to an increase in the fixation probability of the invading mutant.
Specifically, we prove that for any population size $N$ the Complete graph $K_N$ is ``locally optimal''.
That is, if mutants instead perceive the population through a graph $M_N$ that misses a single edge, 
their fixation probability is decreased.
However, we show that the obvious generalization of this claim is not true:
By numerically computing the fixation probabilities for small population sizes,
we identify specific circumstances in which 
making mutants less motile actually increases their fixation probability.
Next, we show that even for simple population structures that correspond to island models,
the extent to which increased motility helps the mutant fixate can vary considerably, depending on the exact layout of the extra connections.
Finally, we show that for low-dimensional lattices,
the effect of altered motility is comparable to the effect of altered reproductive rate:
in the limit of large population size,
the fixation probability of a mutant is either constant or exponentially small, depending on whether it is more or less motile than the residents.

\section{Model}
\paragraph{Standard Moran process on a graph.}
Within the framework of Evolutionary graph theory~\cite{lieberman2005evolutionary},
a population structure is described as a graph (network), where
nodes (vertices) represent locations (sites) and the graph connectivity defines the topology and the neighborhood.
There are $N$ nodes and each node is occupied by a single individual.
Each individual is either of type~$A$ (mutant) with fitness $r_A$, or of type~$B$ (resident) with fitness $r_B$.
The evolutionary dynamics is governed by the standard stochastic discrete-time Moran Birth-death process, adapted to the population structure:
at each time point, a single individual is picked for reproduction, proportionally to its fitness.
This focal individual produces offspring (a copy of itself), and the offspring then migrates and replaces a random neighboring individual.

The probability of migration from node $i$ to node $j$ is given by an $N\times N$ dispersal matrix $M=(m_{i,j})_{i,j=1}^N$.
Thus, for undirected, unweighted graphs (which are the focus of this work), the entries $m_{i,j}$ of the dispersal matrix $M$ satisfy
\[m_{i,j}=\begin{cases}
 1/\deg(i), &\text{ if nodes $i$ and $j$ are adjacent,}\\
 0, &\text{ otherwise.}
 \end{cases}
 \]
 (Here $\deg(u)$ is the \textit{degree} of node $u$, that is, the number of nodes adjacent to $u$.)

\paragraph{Moran process on two graphs.}
It is commonly assumed that the dispersal matrix is independent of the two types, that is,
both types of individuals perceive the population through the same population structure.
Following the recent work of Melissourgos et al.~\cite{spirakis2021extension}, here we study a more general case in which
the dispersal pattern depends on the type of the offspring that migrates.
Thus, we consider two graphs $G^A$, $G^B$ and the corresponding dispersal matrices $M^A=(m^A_{i,j})_{i,j=1}^N$, $M^B=(m^B_{i,j})_{i,j=1}^N$.
That is, any time a type~$A$ individual reproduces at a node $i$, the offspring replaces an individual at node $j$ with probability $m^A_{ij}$.
In contrast, the offspring of a type~$B$ individual reproducing at node $i$ migrates to node $j$ with probability $m^B_{ij}$, see~\cref{fig:fig1}.

\begin{figure}[h]
\begin{center}
\includegraphics[scale=1]{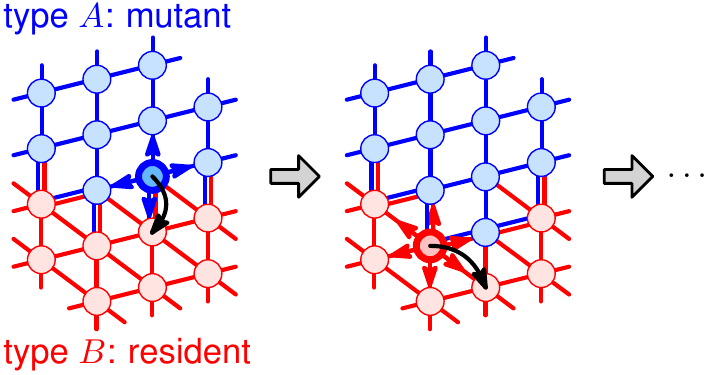}
\end{center}
\caption{\textbf{Moran process with type-dependent dispersal patterns.}
In each discrete time-step, a random individual reproduces and the offspring proliferates to a neighboring node.
Type-$A$ offspring (mutant, blue) migrate along the edges of the blue graph $G_A$,
whereas type-$B$ offspring (residents, red) migrate along the red edges of $G_B$.
The key quantity is the fixation probability $\fp(G^A,G^B)$ that a single initial mutant 
successfully invades the population of residents. 
}
\label{fig:fig1}
\end{figure}

The state of the population at any given time point is described by a vector ${\bf n}=(n_1,\dots,n_N)$ of $N$ zeros and ones, where $n_i=1$ denotes that node $i$ is currently occupied by a type~$A$ individual (mutant).
The model is a Markov chain with $2^N$ possible states.
Two of the states are absorbing, and they correspond to homogeneous population consisting purely of type~$A$ individuals (state ${\bf n^1}=(1,\dots,1)$) or type~$B$ individuals (state ${\bf n^0}=(0,\dots,0)$).
Formally, the transition probabilities between the states are given by the following equations:

\begin{align}
p^{+}_{i}({\bf n}) :=& \Pr[ (n_1,\dots,n_i,\dots, n_N)  \ \to\  (n_1,\dots,n_i+1,\dots, n_N)] \nonumber\\
=& \frac{\sum_{j} n_j(1-n_i) r_A m^{A}_{ji}}{\sum_{k}\left(n_kr_A+ (1-n_k)r_B\right)  }  \nonumber\\
p^{-}_{i}({\bf n}) :=& \Pr[(n_{1},\dots,n_{i},\dots, n_{N})  \ \to\  (n_{1},\dots,n_{i}-1,\dots, n_{N})] \nonumber\\
=&\frac{\sum_{j} (1-n_{j})n_{i} r_{B}m^{B}_{ji}}{\sum_{k}\left(n_kr_A+ (1-n_k)r_B\right)}
\label{transition}
\end{align}

\paragraph{Questions and Results.}
In this work, we study how differences in the migration and dispersal pattern
$G^A$ of mutants and $G^B$ of residents influence
the fate of a single random mutant who appears at a random location.
As a measure of the mutant success, we use its fixation probability under neutral drift (that is, $r_A=r_B$).
We denote this quantity by $\fp(G^A, G^B)$.
It is known that whenever the two types have the same dispersal pattern ($G^A=G^B$), the fixation probability under neutral drift is equal to $1/N$, regardless of $G^A=G^B$~\cite{broom2010two}.
Thus, the regime of neutral drift provides a clean baseline and it decouples the effect of a difference in population structure from other effects. 

Specifically, we study the following questions:
\begin{enumerate}
\item Does increased motility increase or decrease the mutant fixation probability?
\item Can the effect be quantified for simple natural structures, such as island models or low-dimensional lattices?
\end{enumerate}

To address the first question,
in~\cref{sec:small} we numerically compute the fixation probabilities $\fp(G^A, G^B)$ for all pairs $G^A$, $G^B$ of graphs of small size.
We find that, generally speaking, increased motility potential (that is, living on a graph with more edges) tends to
increase the fixation probability of the mutant.
In particular, we prove (see~\thmminusedge{} in~\SI) that the Complete graph is locally optimal, in a sense described below.
However, we also identify special cases, in which
an increase in the motility potential decreases the fixation probability rather than increasing it.
This suggests that for arbitrary population structures the effects of motility on the fixation probability are complex.
Given this complexity, we proceed to study pairs of regular structures.

To address the second question,
in~\cref{sec:dense} we consider certain population structures that correspond to island models with two equal islands.
We show that two such structures with the same total number of edges exhibit a substantially different behavior in the limit $N\to\infty$.
This implies that the effect of altered motility in dense regular graphs can not be easily quantified in terms of a single parameter (the total number of edges).
Then, motivated by tissue organization in multicellular organisms, in~\cref{sec:lattices} we consider 1- and 2-dimensional lattices.
We show that in this setting, the difference in motility can be quantified and it has analogous effect to a difference in reproductive rate:
increased motility results in mutant fixation with constant probability, whereas decreased motility causes the fixation probability to be exponentially small.

\paragraph{Related work.}
The question of computing fixation probabilities for various versions of Moran processes on graphs has been studied extensively.
In principle, for any population structure the fixation probability can be computed numerically by solving a system of linear equations~\cite{hindersin2016exact}.
However since the size of the system is generally exponential in the population size, this approach is practically feasible only for very small populations, or for very specific population structures~\cite{tkadlec2019population,moller2019exploring,pavlogiannis2017amplification}.
For large population sizes, there exist efficient approximation algorithms either in the limit of weak selection~\cite{allen2017evolutionary,allen2021fixation,mcavoy2021fixation} 
or when the underlying graph is undirected~\cite{diaz2014approximating,ann2020phase}. 
While this manuscript was under preperation, Melissourgos et al.~\cite{spirakis2021extension}
extended the latter result to a special case of the two-graph setting, namely
for mutants with reproductive advantage  ($r_A\gneqq r_B$) who perceive the population as a Complete graph ($G_A=K_N$).
They also established bounds for certain special pairs of graphs, such as
the Complete graph invading the Star graph. 
In contrast, in this work we consider the problem from a biological perspective and
we study mutants with no reproductive advantage ($r_A=r_B$) who, similarly to the residents,
perceive the population structure either as an island model or as a low-dimensional lattice.
In this way, the two manuscripts complement each other.
We also answer some questions stated in~\cite{spirakis2021extension} related to the best-response dynamics in the space of all graphs.
Namely, we show that while the Complete graph is locally optimal (see~\thmminusedge{} in~\SI), it is not always the best response  (see~\cref{fig:game}).

\section{Results}

\subsection{Small Graphs}\label{sec:small}
In this section we consider population structures on $N$ labelled nodes, for small values of $N$.
In this regime, the fixation probability $\fp(G^A,G^B)$ can be computed exactly,
by numerically solving a system of $2^N$ linear equations. 

For $N=2$ there is only one connected graph and, by symmetry, the fixation probability of a single type~$A$ individual is equal to 1/2.
For $N=3$ there are four undirected graphs: a single graph $G^0$ with three edges (equivalently a complete graph, or a cycle), and
three different graphs $G^1$, $G^2$, $G^3$ with two edges each.
The corresponding fixation probabilities are given in~\cref{fig:n3}b.
Note that $\fp(G^A,G^B)=1/N$ when $G^A$ and $G^B$ are identical, but in general $\fp(G^A,G^B)$ could be both more than $1/N$ or less than $1/N$, even when $G^A$ and $G^B$ are isomorphic (if they are not identical), see~\cref{fig:n3}c.

\begin{figure}[h]
\begin{center}
\includegraphics[scale=1]{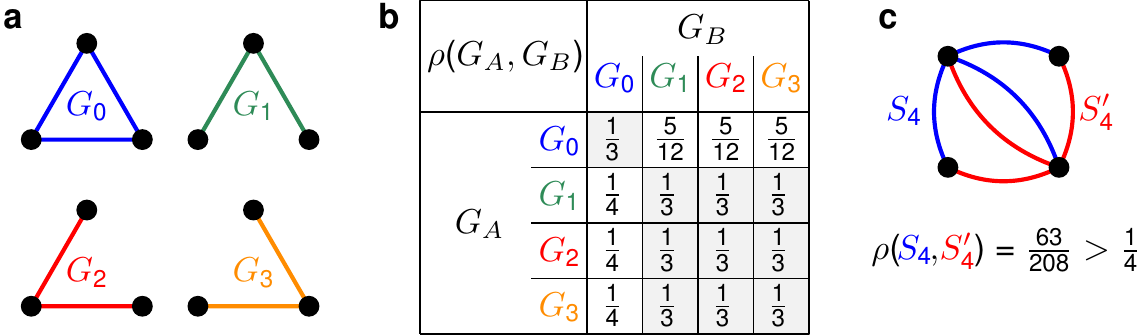}
\end{center}
\caption{\textbf{Small populations $N=3$.}
\textbf{a,} There are four connected graphs $G^0,\dots,G^3$ on $N=3$ labeled nodes.
\textbf{b,} The fixation probabilities $\fp(G^A,G^B)$ for all $4\cdot 4=16$ combinations.
\textbf{c,} When $G^A$ and $G^B$ are isomorphic but not identical, the fixation probability is not necessarily equal to $1/N$. For instance,
 we have $\fp(S_4,S'_4)=63/208\doteq 0.31$. 
}
\label{fig:n3}
\end{figure}
For general $N$, there are $2^{N^2-N}$ pairs of graphs on $N$ labeled nodes.
Already for $N=6$ this is more than a billion pairs, hence
in what follows we focus on the case when one of the graphs $G^A$, $G^B$ is a Complete graph, denoted $K_N$.
We use a shorthand notation $\fp(G)=\fp(G,K_N)$, for the fixation probability of a single mutant who perceives the population structure as a graph $G$ and invades a population of residents who perceive the population structure as a Complete graph $K_N$.
Analogously, we denote by $\fpx(G)= \fp(K_N,G)$ the fixation probability of a single mutant living on a Complete graph $K_N$ and invading a population of residents who live on~$G$.
\cref{fig:n6} shows $\fp(G)$ and $\fpx(G)$ for all undirected graphs on $N=6$ vertices, based on the number of edges in $G$.

\begin{figure}[h] 
	\centering
\includegraphics[scale=1]{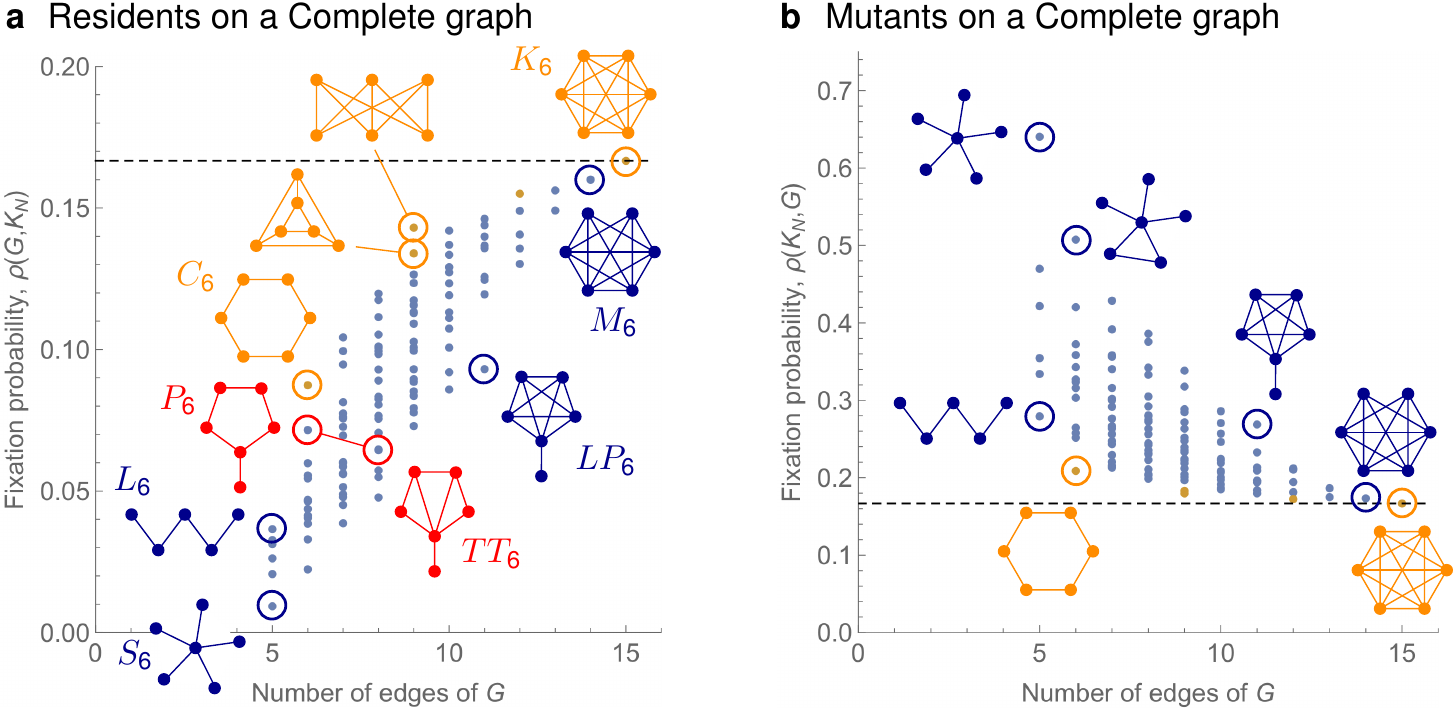}
        \caption{\textbf{Small populations $N=6$.} The fixation probabilities
        \textbf{a,} $\fp(G)=\fp(G,K_N)$ and
        \textbf{b,} $\fpx(G)=\fp(K_N,G)$ 
        for all 112 graphs $G$ on $N=6$ vertices.
        Each dot corresponds to a graph~$G$, the orange dots correspond to regular graphs.
        When $G=K_N$, both $\fp(G)$ and $\fpx(G)$ are equal to $1/N$.
        Other graphs $G_6$ on six vertices satisfy $\fp(G_6)<1/6$ and $\fpx(G_6)>1/6$.       
}
\label{fig:n6}
\end{figure}

\paragraph{Maximal and minimal fixation probability.} 
Among the graphs on 6 vertices, fixation probability $\fp(G)$ is maximized when $G$ is the Complete graph $K_6$.
Recall that $\fp(K_N)=1/N$, for any integer $N$.
In relation to this, we prove that $\fp(K_N)$ is ``locally maximal'':
that is, we show that if one edge is removed from the Complete graph $K_N$, then the resulting graph $M_N$ satisfies
$\fp(M_N)=\frac{N-2}{(N-1)^2}<\frac1N=\fp(K_N)$.
Similarly, we prove that $K_N$ is locally minimal with respect to $\fpx(G)$:
we show that $\fpx(M_N)=1/(N-1)>1/N$, see~\thmminusedge{} in \SI.

Note that, in contrast, for $N=6$ the fixation probability $\fp(G)$ is minimized for the Star graph $S_6$.
Here a \textit{Star graph}, denoted $S_N$, consists of a single node (``center'') connected to all other nodes (``leaves'').
It is known~\cite{spirakis2021extension} that $\fp(S_N)\le 1/(N-2)!$ and $\fpx(S_N)\to 1$ as $N\to\infty$.

\paragraph{Relation to the number of edges.}
In general, fixation probability $\fp(G)$ tends to be higher for graphs $G$ with more edges.
However, this is only a rule of thumb.
For instance, the Lollipop graph $\lol_6$ has a relatively low fixation probability $\fp(\lol_6)$, given its number of edges.
Here a \textit{Lollipop graph}, denoted $\lol_N$, consists of a Complete graph on $N-1$ vertices and a single extra edge connecting the last node.
Moreover, adding edges to a graph $G$ to produce a graph $G'$ sometimes does not increase the fixation probability but rather decreases it:
this is illustrated by the Pan graph $\pan_6$ and the Treetop graph $\ttop_6$ for which we have $\fp(\pan_6)>0.071$ and $\fp(\ttop_6)<0.065$.
Here a \textit{Pan graph}, denoted $\pan_N$, consists of a cycle on $N-1$ nodes and a single extra edge connecting the last node.
In a \textit{Treetop graph}, denoted $\ttop_N$, the vertex with degree 3 is further connected to all other vertices.

\paragraph{Regular graphs.}
Recall that a graph is \textit{regular} if all its nodes have the same degree (that is, the same number of neighbors).
\cref{fig:n6} shows that, given a fixed number of edges, the fixation probability $\fp(G)$ tends to be higher for regular (or almost regular) graphs as compared to non-regular graphs.
For instance, for the Cycle graph $C_6$ and the Line graph $L_6$, the fixation probabilities $\fp(C_6)$, $\fp(L_6)$ are relatively high, given the low number of edges of $C_6$ and $L_6$.
Here a \textit{Cycle graph}, denoted $C_N$, is the connected graph where each node is connected to two neighbors, and
a \textit{Line graph}, denoted $L_N$, is the Cycle graph with one edge missing.
However, we prove that the Line graph generally does not maximize the fixation probability among the connected graphs with $N-1$ edges (so-called trees):
in particular, for $N=8$ the graph $G_8$ consisting of three paths of lengths 2, 2, and 3 meeting at a single vertex satisfies $\fp(G_8)>0.0098>0.0095>\fp(L_8)$.

Moreover, the Isothermal Theorem of~\cite{lieberman2005evolutionary}
 does not hold:
for two different regular graphs $G$, $G'$ (even with the same degree) the fixation probabilities $\fp(G)$, $\fp(G')$ are generally different, as witnessed by the two 3-regular graphs with $N=6$ nodes and 9 edges.

%
%
\subsection{Dense regular graphs}\label{sec:dense}

As suggested by~\cref{fig:n6}, regular graphs $G$ have high fixation probability $\fp(G)$, compared to other graphs with the same number of edges.
Here we consider certain simple regular graphs that contain approximately half of the total possible number of edges.
We show that for some such graphs, the fixation probability is comparable to that of a Complete graph,
whereas for other graphs it is substantially smaller. Thus the Isothermal Theorem~\cite{lieberman2005evolutionary} is strongly violated.

Given a population size $N$ (with $N$ even), let $B_N=K_{N/2,N/2}$ be a (complete) \textit{Bipartite graph} with equal parts $N/2$, $N/2$ and let $T_N$ be a \textit{Two-clique} graph obtained by adding $N/2$ matching edges to a union of two disjoint Complete graphs of size $N/2$ each, see~\cref{fig:bntn}a.
Note that both $B_N$ and $T_N$ have precisely $\frac14N^2$ edges, which is roughly half of the edges of $K_N$.
Also, note that both $B_N$ and $T_N$ represent populations subdivided into two large islands:
in case of $B_N$, the offspring always migrates to the opposite island, whereas
in case of $T_N$ the offspring mostly stays in the same island and it migrates only rarely (namely with probability of the order of $1/N$).

\begin{figure}[h] 
	\centering
\includegraphics[width=\linewidth]{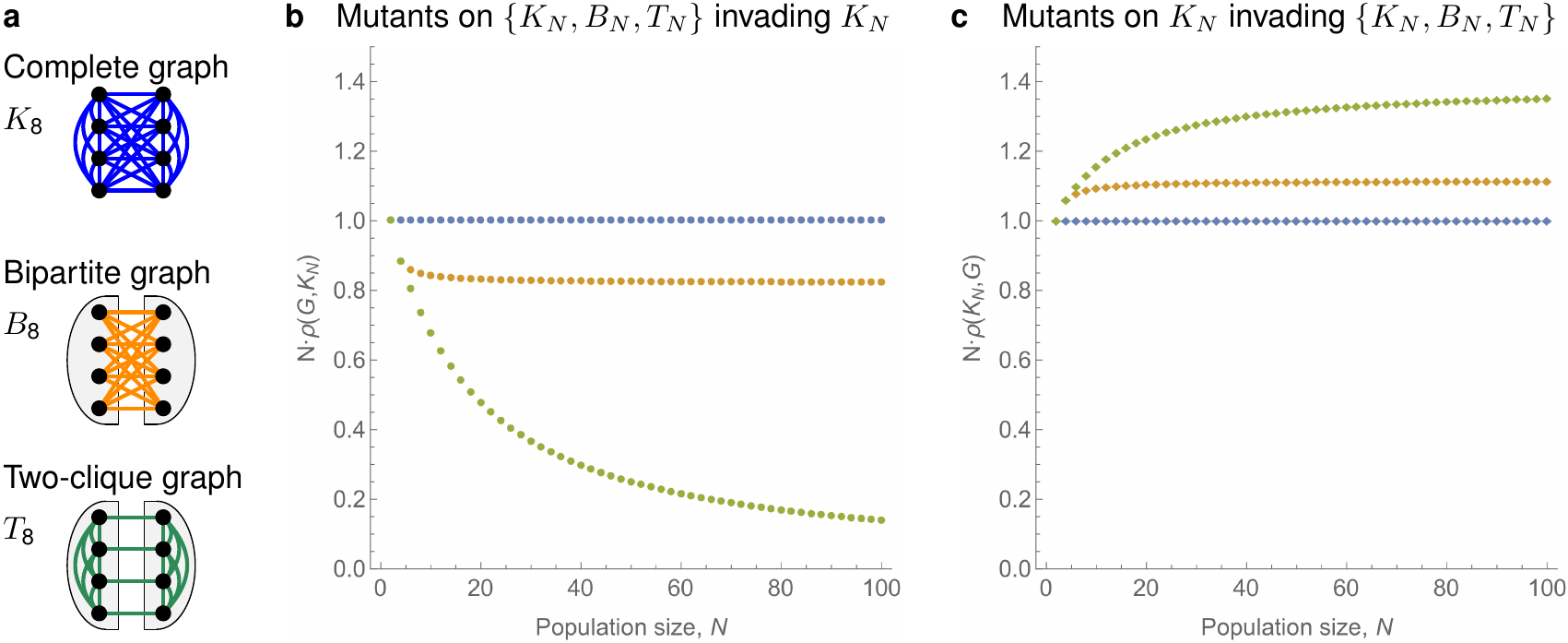}
        \caption{\textbf{Dense regular graphs.}
        \textbf{a,} In a (complete) Bipartite graph $B_N$ and a Two-clique graph $T_N$, each vertex is connected to $N/2$ other vertices (here $N$ is even).
        \textbf{b,} When the mutant lives on $B_N$, the fixation probability satisfies $\fp(B_N)\approx 0.82\cdot \frac1N$. In contrast, when the mutant lives on $T_N$, the fixation probability $\fp(T_N)$ tends to zero faster than $1/N$.
        \textbf{c,} When the residents live on $B_N$ or $T_N$, we have $\fpx(B_N)\approx 1.1\cdot \frac1N$ ans $\fpx(T_N)\approx 1.4\cdot\frac1N$.
        }
\label{fig:bntn}
\end{figure}

We prove that $\fp(B_N)>0.58/N$ (see \thmbipartite{} in \SI).
Since $\fp(K_N)=1/N$, this implies that missing roughly half of the edges only reduces the fixation probability by a constant factor, independent of the population size $N$.
In fact, numerical computation shows that $N\cdot \fp(B_N)\approx 0.82$ whereas for the Two-clique graph we observe $N\cdot \fp(T_N)\to 0$, see~\cref{fig:bntn}b.

The intuition for this distinction is as follows.
On both graphs, the state of the system at any given time point is completely described by the frequencies $N_L\in[0,N]$ and $N_R\in[0,N]$ of mutants in the left and the right half.
On $B_N$, the two frequencies remain roughly equal throughout the process ($N_L\approx N_R$):
indeed, once say $N_L\gg N_R$, more mutant offspring is produced on the left and they migrate to the right, thereby helping balance the numbers again. 
In contrast, on $T_N$ the mutants migrate rarely, thus the lineage produced by the initial mutant remains trapped in one half for substantial amount of time. 
Throughout that time, the mutants are ``blocking'' each other from spreading more than they would block each other if they were split evenly between the two halves:
indeed, with all mutants in one half, the probability that a reproducing mutant replaces another mutant (thus not increasing the size of the mutant subpopulation) is twice as large, as compared to the situation where the mutants are evenly split.
For small mutant subpopulations, this effect is non-negligible and it causes the fixation probability $\fp(B_N$) to decay faster than inversely proportionally to $N$.

Regarding $\fpx$, we observe $N\cdot \fpx(B_N)\approx 1.11$ and $N\cdot\fpx(T_N)\approx 1.4$, see~\cref{fig:bntn}c.
The intuition is that when mutants live on a Complete graph $K_N$, the offspring is equally likely to migrate to any location.
By randomness, the condition $N_L\approx N_R$ is thus maintained throughout most of the early stages of the process.
Therefore, as with $\fp(B_N)$, both $\fpx(B_N)$ and $\fpx(T_N)$ are inversely proportional to $N$.
To sum up, the graphs $B_N$ and $T_N$ show a considerably different behavior in terms of $\fp$ but a qualitatively comparable behavior in terms of $\fpx$.

%
%
\subsection{Lattice graphs}\label{sec:lattices}
Here we study sparse regular graphs, specifically lattice graphs.
Lattices exist in any number of dimensions.
We focus on one- and two-dimensional lattices, since those are biologically relevant.
For each dimension, we study the effect of increased or decreased connectivity (degree) of the lattice on the fixation probability of an invading mutant.

%
%
\paragraph{One-dimensional lattices.}
In one dimension, we consider circulation graphs $\Cir^d_N$ (already studied in this context from a different point of view, see~\cite{spirakis2021extension}).
For a fixed even integer $d$, a \textit{$d$-Circulation} graph, denoted $\Cir^d_N$, consists of $N$ vertices arranged in a cycle, where each vertex is connected to $d$ other vertices, namely the next $d/2$ vertices and the previous $d/2$ vertices in the cyclic order, see~\cref{fig:1d-lattices}a.

To shorten the notation, we denote by $\fponed(d_1,d_2)= \fp(\Cir^{d_1}_N,\Cir^{d_2}_N)$ the fixation probability of a mutant living on a one-dimensional lattice $\Cir^{d_1}_N$ with degree $d_1$ versus a population of residents living on a one-dimensional lattice $\Cir^{d_2}_N$ with degree $d_2$.
Note that when $d_1=d_2=d$ then $\fponed(d,d)=1/N$.

\begin{figure}[h] 
	\centering
\includegraphics[width=\linewidth]{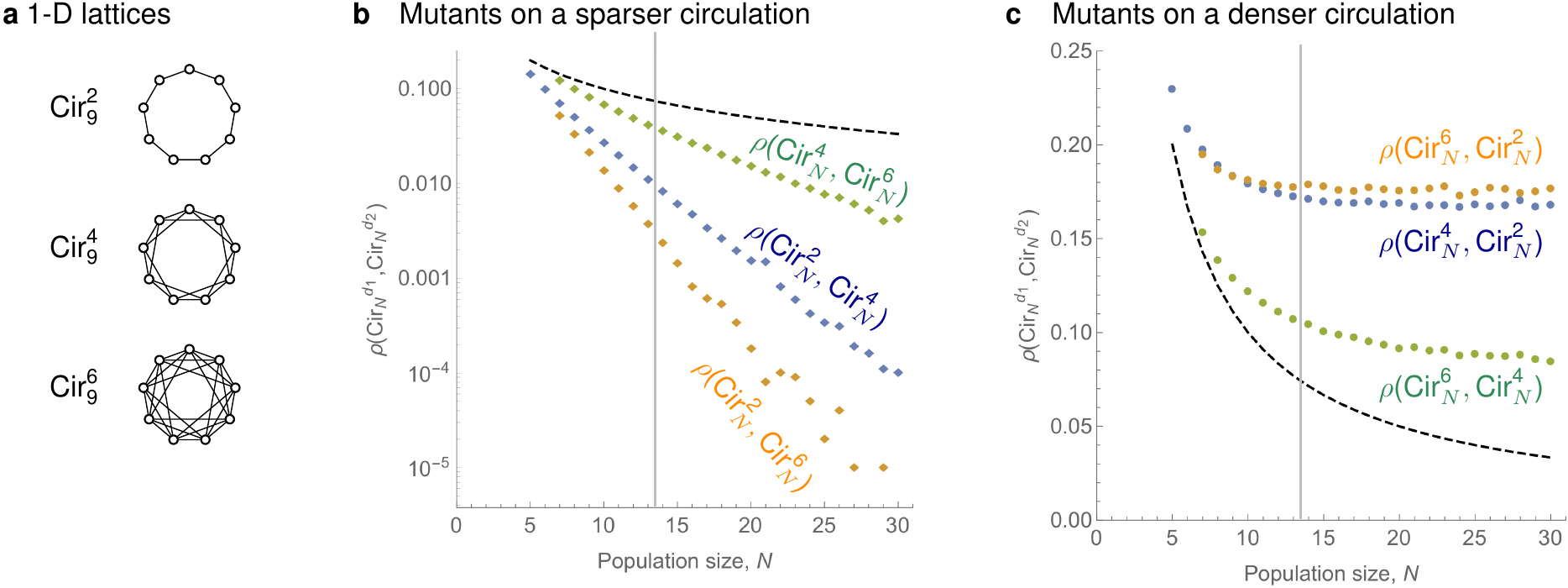}
        \caption{\textbf{Overlaying 1-D lattices with different connectivities.}
        \textbf{a,} A circulation graph $\Cir^d_N$ is a 1-dimensional lattice with periodic boundary and connectivity (degree) $d$.
        We consider $d\in\{2,4,6\}$.
        \textbf{b,} When mutants live on a less connected graph ($d_1<d_2$), their fixation probability decays to 0 at an exponential rate as $N\to\infty$ (here $y$-axis is log-scale).
        \textbf{c,} In contrast, when mutants live on a more densely connected graph ($d_1>d_2$), their fixation probability tends to a constant. 
        In both panels, the black dashed line shows the neutral baseline $1/N$.
        The values for $N\le 13$ are computed by numerically solving a large system of linear equations.
        The values for $N\ge 14$ are obtained by simulating the process $10^5$ times and reporting the proportion of the runs that terminated with the mutant fixating.
        }
        \label{fig:1d-lattices}
\end{figure}

When the degrees $d_1$, $d_2$ of the mutant and resident graph differ, the fixation probability crucially depends on which of the two degrees is larger. 
When the mutant graph has a lower connectivity ($d_1<d_2$) then $\fponed(d_1,d_2)$
 tends to 0 exponentially quickly as $N\to\infty$, see~\cref{fig:1d-lattices}b.
In contrast, when the mutant graph has a higher connectivity ($d_1>d_2$) then $\fponed(d_1,d_2)$
 tends to a positive constant $c$ that depends on $d_1$ and $d_2$, see~\cref{fig:1d-lattices}c.
Specifically, for large $N$ we observe that  $\fponed(4,2)\approx 0.16$, $\fponed(6,2)\approx 0.17$ and $\fponed(6,4)\approx 0.09$.

Those results are in agreement with bounds $0.11\le \fponed(4,2)\le 0.25$ that we prove analytically by a stochastic domination argument (see~\thmoned{} in \SI).
The intuition behind the argument is that once the mutants form a contiguous block of a large size, the block is more likely to expand rather than to diminish at both interfaces.
Indeed, the probability of gaining the boundary node is the same as losing the (other) boundary node but, on top of that, mutants could skip the boundary node, invade the interior of the resident territory and only after that gain the skipped node.
This event has a non-negligible probability of happening, hence there is a positive bias favoring the spread of mutants.
For a formal proof, see~\thmoned{} in \SI.

%
%
\paragraph{Two-dimensional lattices.}
 In two dimensions, we consider graphs drawn on a square lattice with periodic boundary condition.
For instance, by connecting each vertex to its 4 closest vertices (Von Neumann neighborhood), we obtain a graph $\Sq^4_N$, see~\cref{fig:2d-lattices}a.
Similarly, by connecting to 8 closest vertices (Moore neighborhood) we obtain a graph $\Sq^8_N$.
We also consider other graphs $\Sq^d_N$ with different connectivities $d\in\{6,12,20\}$.
We again shorten the notation by denoting $\fptwod(d_1,d_2)=\fp(\Sq^{d_1}_N,\Sq^{d_2}_N)$.

\begin{figure}[h] 
	\centering
\includegraphics[width=\linewidth]{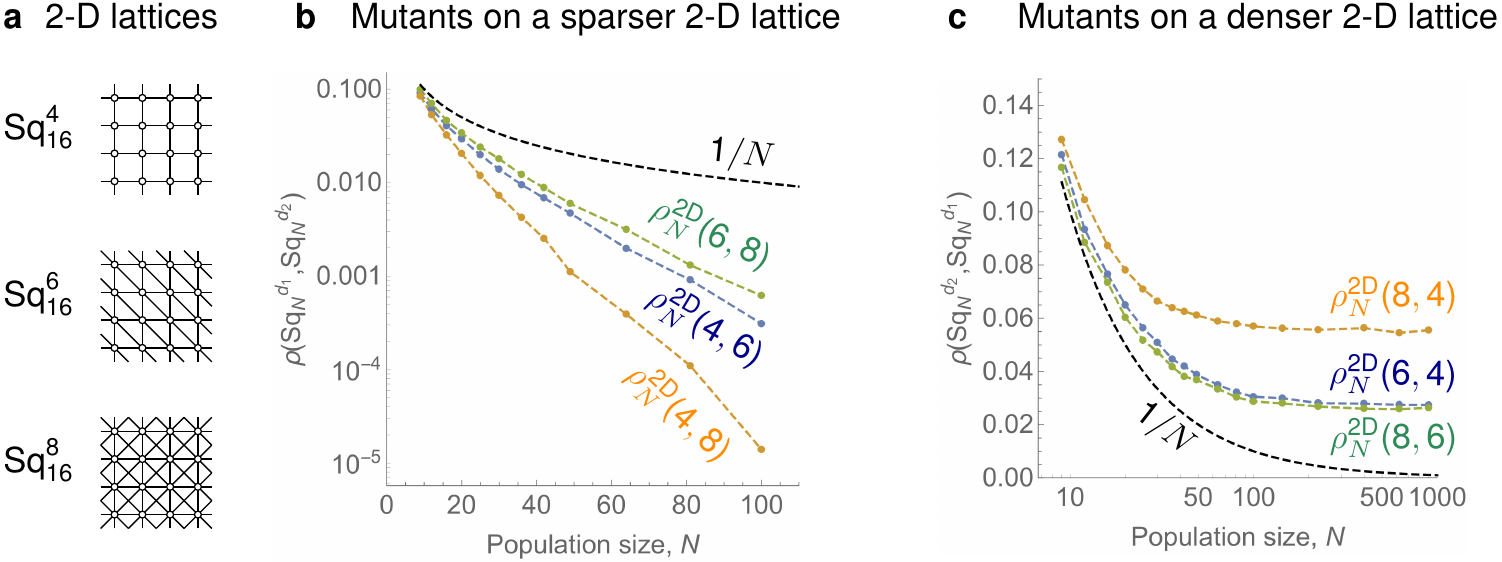}
        \caption{\textbf{Overlaying 2-D lattices with different connectivities.}
        \textbf{a,} We consider two-dimensional lattices with degree $4$ (Vonn Neumann neighborhood), $6$ (triangular grid), and $8$ (Moore neighborhood), and with dimensions $3\times 3,3\times 4,\dots,30\times 30$.
        \textbf{b, c} Similarly to the 1-D case, the fixation probability decays to 0 exponentially quickly when $d_1<d_2$, whereas it tends to a positive constant when $d_1>d_2$.
 The black dashed line shows the baseline $1/N$.
 The values are obtained by simulating the process (at least $10^5$ repetitions per data point).
        }
        \label{fig:2d-lattices}
\end{figure}

The results are analogous to the case of one-dimensional lattices.
When the mutants live on a less connected lattice,
their fixation probability tends to 0 exponentially quickly.
In contrast, when they live on a more densely connected lattice,
their fixation probability tends to a constant as the population size $N$ tends to infinity (see~\cref{fig:2d-lattices}).

%
%
\paragraph{Effective fitness.}
The behavior of the fixation probability for pairs of low-dimensional lattices is reminiscent of the behavior of the fixation probability
$\fpknr$ of a single mutant with relative reproductive rate $r\ne 1$ in a well-mixed population of $N-1$ other residents.
In that setting, we have $\fpknr=\frac{1-1/r}{1-1/r^N}$.
For any fixed $r\ne 1$, the formula exhibits one of two possible behaviors in the limit $N\to\infty$.
When $r<1$ then $\fpknr$ decays approximately as $1/r^N$.
In contrast, when $r>1$ then it tends to a positive constant $1-1/r$.
(When $r=1$ we have $\fpknr=1/N$ by symmetry.)

This suggests a possible interpretation:
for the neutral mutant, living on a more densely connected lattice has a comparable effect on the fixation probability as having a certain relative reproductive advantage $r_{d_1,d_2}$.
Formally, given a population size $N$ and two lattices $L_N$, $L'_N$ we define the \textit{effective fitness}, denoted
$r(L_N,L'_N)$, as the unique number $r$ such that 
\[ \fp(L_N,L'_N) = \fpknr.
\]
In other words, the effective fitness is such a number $r(L_N,L'_N)$, that a neutral mutant on a lattice $L_N$ invading a lattice $L'_N$ has the same fixation probability as a mutant with relative reproductive advantage $r(L_N,L'_N)$ in a well-mixed population.

For pairs of low-dimensional lattices with different connectivities $d_1,d_2$, the effective fitness can be computed from the data presented above, see~\cref{fig:effective}.
We observe that while the effective fitness depends on the connectivities $d$, $d'$ of the two lattices and on their dimensionality,
it is mostly independent of the population size $N$.

\begin{figure}[h] 
	\centering
\includegraphics[width=\linewidth]{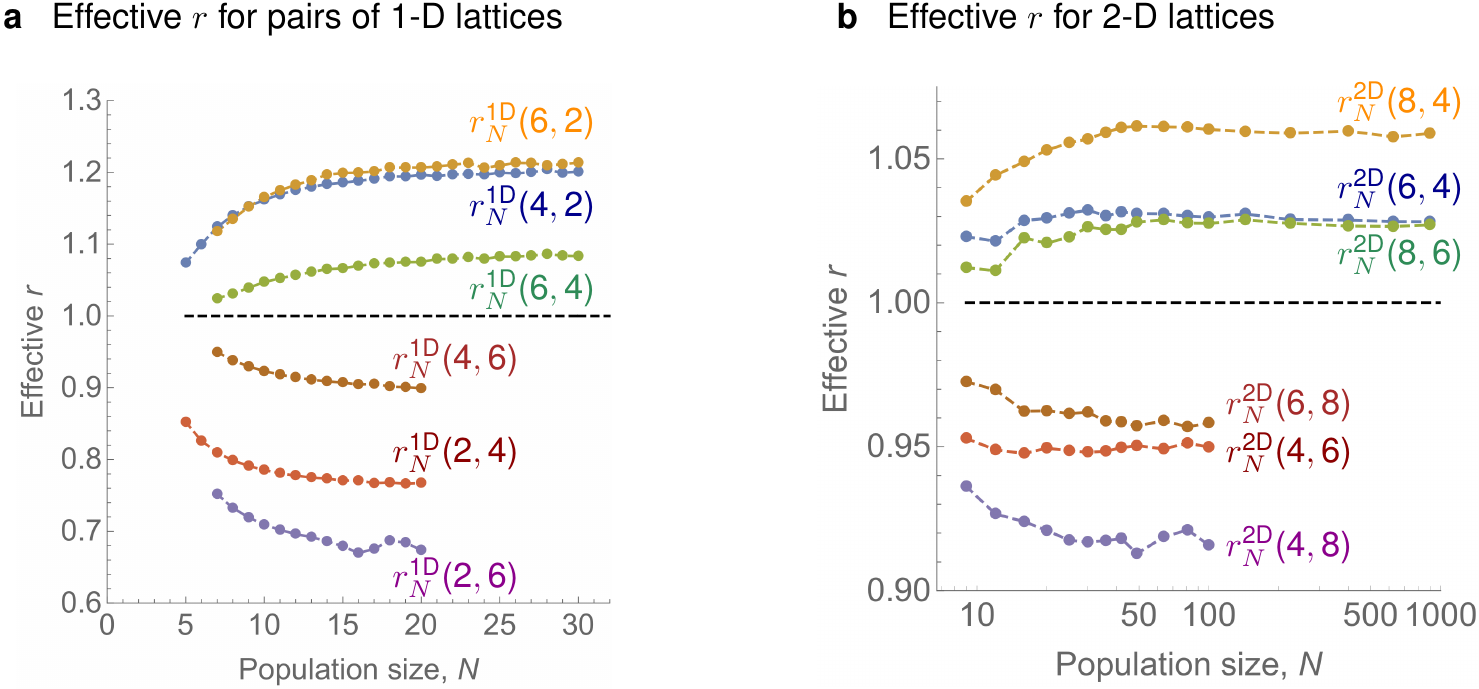}
        \caption{\textbf{Effective fitness.} Given the connectivities $d$, $d'$ of the mutant and resident lattice, we compute the effective fitness that would result in the same fixation probability, had both types lived on a Complete graph.
        \textbf{a,} One-dimensional lattices $\Cir^d_N$ with $d\in\{2,4,6\}$.
        \textbf{b,} Two-dimensional lattices $\Sq^d_N$ for $d\in\{4,6,8\}$. We have
        $\rtwod(8,4)\approx 1.06$,
        $\rtwod(6,4)\approx \rtwod(8,6)\approx 1.03$, and
        $\rtwod(6,8)\approx 0.96$,
        $\rtwod(4,6)\approx 0.95$,
        $\rtwod(4,8)\approx 0.92$.
        In both panels, the black dashed line shows the neutral baseline $r=1$.
        }
        \label{fig:effective}
\end{figure}

 \section{Discussion}
In this work, we studied the effect of mutations that, rather than altering the reproductive rate of the affected individual,
alter how the individual experiences the population structure.
To that end, we considered a powerful framework based on the classical Moran Birth-death process on graphs, in which
the two types of individuals (the novel mutant and the existing residents) perceive the population structure through different graphs.
As the key quantity, we studied the probability $\fp(G^A,G^B)$ that a single neutral mutant who perceives the population structure as a graph $G^A$ successfully invades the population of residents who perceive the population structure as a graph $G^B$.
For small population sizes, we computed the pairwise fixation probabilities numerically, and we observed that $\fp(G^A,G^B)$ tends to be higher when $G^A$ contains many edges (that is, the mutant is more motile) and when $G^A$ is regular.
We note that the latter aspect contrasts with other models of motility, where an increased dispersal potential of the mutant generally diminishes the fixation probability~\cite{thalhauser2010selection,manem2014spatial,krieger2017effects}.

Next, motivated by island models, we considered two regular graphs with the same total number of edges
and we showed that the corresponding fixation probabilities are asymptotically different.
In particular, as the population size~$N$ increases, the fixation probabilities decay at different rates.
Thus, in the asymptotic sense, the Isothermal Theorem of~\cite{lieberman2005evolutionary} is strongly violated.

Finally, we studied the biologically relevant case of 1- and 2-dimensional lattices
and we showed that the dispersal radius has similar effect on the fixation probability as the reproductive rate.
Recall that in large unstructured populations, a beneficial mutation fixates with constant probability, whereas the fixation probability of a deleterious mutation is exponentially small.
Likewise, neutral mutants on lattices with larger dispersal radius have a constant chance of successfully fixating, whereas
having lower dispersal radius leads to fixation of the mutant only with exponentially small probability.
Thus, in terms of the fixation probability of the mutant, perceiving the population through a more densely connected lattice
is effectively equivalent to having an increased reproductive rate.

Moving on to more complex (though perhaps less realistic) population structures, many natural questions arise.
We conclude by commenting on three of them.
Recall that for any graph $G_N$ on $N$ nodes we have $\fp(G_N,G_N)=1/N$~\cite{broom2010two}.

First, \cref{fig:n6} suggests that $\fp(G^A_N,K_N)< 1/N$ for all mutant graphs $G^A_N\ne K_N$.
While we can prove that $\fp(M_N,K_N)< 1/N$ for a graph $M_N$ that misses a single edge (see~\thmminusedge{} in \SI),
the general claim is left as an open problem.
Similarly, we do not know whether $\fp(K_N,G^B_N)>1/N$ holds for all resident graphs $G^B_N\ne K_N$ (we do know that it holds for $G^B_N=M_N$).

Second, following the game theory perspective, Melissourgos et al.~\cite{spirakis2021extension} asked what is the best mutant response to a given resident graph. That is, given a resident graph $G^B_N$ on $N$ nodes, which mutant graph $G^A_N$ on $N$ nodes maximizes the fixation probability $\fp(G^A_N,G^B_N)$?
Our results for small graphs show that although the Complete graph $K_N$ is frequently the best mutant response,
it is not always the case, see~\cref{fig:game}.
In particular, when the residents live on a Star graph $S_6$, the population is easier to invade through a graph $M_6$ that misses a single edge, rather than through the Complete graph $K_6$ -- direct computation gives
$\fp(M_6,S_6)>0.643>0.641>\fp(K_6,S_6)$.
We note that the difference is minor -- both mutant graphs $M_6$ and $K_6$ provide a fixation probability well over the neutral threshold value $1/6\approx 0.167$.

\begin{figure}[h] 
	\centering
\includegraphics[scale=1]{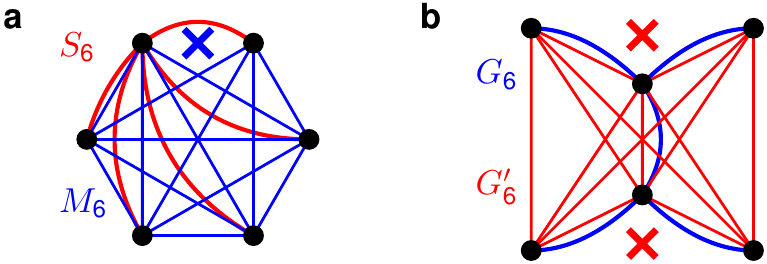}
        \caption{\textbf{Best-response graphs.} The Complete graph is sometimes not the best response when optimizing the fixation probability $\fp$.
        \textbf{a.} The resident population living on the Star graph $S_6$ (red) is easier to invade by mutants living on $M_6$ (blue) than by mutants living on the Complete graph $K_6$.
        \textbf{b,} The mutants living on a graph $G_6$ (blue) have a harder time invading the graph $G'_6$ than they have invading the Complete graph $K_6$.
        }
        \label{fig:game}
\end{figure}

For the complementary question of what is the best resident response $G^B_N$ to a given mutant graph $G^A_N$, the situation is analogous:
while the Complete graph is generally hard to invade, it is sometimes not the hardest one.
As an example (see~\cref{fig:game}b), when mutants live on a graph $G_6$ then for the graph $G'_6$ we have
$\fp(G_6,G'_6)<0.025<0.026<\fp(G_6,K_6)$.

Third, we observe that when mutants and residents live on different graphs $G^A_N\ne G^B_N$ with the same edge density, the fixation probability $\fp(G^A_N,G^B_N)$ typically drops below $1/N$.
The intuition is that the mutant subpopulation tends to form clusters in $G^A_N$ but not necessarily in $G^B_N$.
As a consequence, mutants block each other from spawning onto a resident 
but they do not guard each other from being replaced by residents. 

As an extreme example of this phenomenon, suppose that mutants live on a long cycle $G^A_N=C_N$ and they currently form a contiguous block of 10 individuals.
The probability $p^+$ that, in a single step, a mutant is selected for reproduction and its offspring replaces a resident is equal to $p^+=1/N$.
However, if the residents perceive the population as a completely different cycle $G^B_N=C'_N$ (such that no two mutants are adjacent in $C'_N$), then the probability $p^-$ that a resident replaces a mutant equals $p^-=10/N$.
Thus, in a single step, the size of the mutant subpopulation is $10\times$ more likely to decrease than it is to increase.
To some extent, similar effects occur whenever the two graphs $G^A_N$ and $G^B_N$ differ.
This suggests that for distinct graphs $G^A_N\ne G^B_N$ we typically have $\fp(G^A_N,G^B_N)<1/N$.

In one direction, this phenomenon can be easily overcome, for instance when one graph is denser than the other one.
Moreover, as witnessed by the two Star graphs depicted in~\cref{fig:n3}, there are pairs of irregular graphs for which the phenomenon is  overcome in both directions.
However, we are not aware of any such pair $G_N$, $G'_N$ of regular graphs.
Hence this is another open problem:
do there exist two regular graphs such that both $\fp(G_N,G'_N)>1/N$ and $\fp(G'_N,G_N)>1/N$?

\section*{Acknowledgements}
K.C. acknowledges support from ERC Consolidator grant no. (863818: ForM-SMart).

\section*{Author Contributions}
All authors designed the research.
J.T. and K.K. performed the mathematical analysis.
J.T. wrote the computer code and produced the figures.
All authors wrote the manuscript.

\section*{Competing interests}
The authors declare no competing interests.

\section*{Code and Data availability}
The datasets generated during and/or analysed during the current study are available in the Figshare repository,
\url{https://figshare.com/s/2d9cc41100151547b61a}.
 
\bibliographystyle{naturemag}
\bibliography{citations}

\begin{thebibliography}{10}
\expandafter\ifx\csname url\endcsname\relax
  \def\url#1{\texttt{#1}}\fi
\expandafter\ifx\csname urlprefix\endcsname\relax\def\urlprefix{URL }\fi
\providecommand{\bibinfo}[2]{#2}
\providecommand{\eprint}[2][]{\url{#2}}

\bibitem{durrett2008probability}
\bibinfo{author}{Durrett, R.}
\newblock \emph{\bibinfo{title}{Probability models for DNA sequence evolution}}
  (\bibinfo{publisher}{Springer Science \& Business Media},
  \bibinfo{year}{2008}).

\bibitem{nowak2006evolutionary}
\bibinfo{author}{Nowak, M.~A.}
\newblock \emph{\bibinfo{title}{Evolutionary dynamics}}
  (\bibinfo{publisher}{Harvard University Press}, \bibinfo{year}{2006}).

\bibitem{broom2014game}
\bibinfo{author}{Broom, M.} \& \bibinfo{author}{Rycht{\'a}{\v{r}}, J.}
\newblock \emph{\bibinfo{title}{Game-theoretical models in biology}}
  (\bibinfo{publisher}{CRC Press}, \bibinfo{year}{2014}).

\bibitem{moran}
\bibinfo{author}{Moran, P.}
\newblock \emph{\bibinfo{title}{The Statistical Processes of Evolutionary
  Theory}}, vol. \bibinfo{volume}{First Edition}
  (\bibinfo{publisher}{Clarendon, Oxford}, \bibinfo{year}{1962}).

\bibitem{nagylaki1992introduction}
\bibinfo{author}{Nagylaki, T.} \emph{et~al.}
\newblock \emph{\bibinfo{title}{Introduction to theoretical population
  genetics}}, vol. \bibinfo{volume}{142} (\bibinfo{publisher}{Springer-Verlag
  Berlin}, \bibinfo{year}{1992}).

\bibitem{pollak1966survival}
\bibinfo{author}{Pollak, E.}
\newblock \bibinfo{title}{On the survival of a gene in a subdivided
  population}.
\newblock \emph{\bibinfo{journal}{Journal of Applied Probability}}
  \textbf{\bibinfo{volume}{3}}, \bibinfo{pages}{142--155}
  (\bibinfo{year}{1966}).

\bibitem{nagylaki1980strong}
\bibinfo{author}{Nagylaki, T.}
\newblock \bibinfo{title}{The strong-migration limit in geographically
  structured populations}.
\newblock \emph{\bibinfo{journal}{Journal of mathematical biology}}
  \textbf{\bibinfo{volume}{9}}, \bibinfo{pages}{101--114}
  (\bibinfo{year}{1980}).

\bibitem{whitlock1997effective}
\bibinfo{author}{Whitlock, M.~C.} \& \bibinfo{author}{Barton, N.~H.}
\newblock \bibinfo{title}{The effective size of a subdivided population}.
\newblock \emph{\bibinfo{journal}{Genetics}} \textbf{\bibinfo{volume}{146}},
  \bibinfo{pages}{427--441} (\bibinfo{year}{1997}).

\bibitem{durrett1994importance}
\bibinfo{author}{Durrett, R.} \& \bibinfo{author}{Levin, S.}
\newblock \bibinfo{title}{The importance of being discrete (and spatial)}.
\newblock \emph{\bibinfo{journal}{Theoretical population biology}}
  \textbf{\bibinfo{volume}{46}}, \bibinfo{pages}{363--394}
  (\bibinfo{year}{1994}).

\bibitem{komarova2006spatial}
\bibinfo{author}{Komarova, N.}
\newblock \bibinfo{title}{Spatial stochastic models for cancer initiation and
  progression}.
\newblock \emph{\bibinfo{journal}{Bull. Math. Biol.}}
  \textbf{\bibinfo{volume}{68}}, \bibinfo{pages}{1573--1599}
  (\bibinfo{year}{2006}).

\bibitem{santos2006evolutionary}
\bibinfo{author}{Santos, F.~C.}, \bibinfo{author}{Pacheco, J.~M.} \&
  \bibinfo{author}{Lenaerts, T.}
\newblock \bibinfo{title}{Evolutionary dynamics of social dilemmas in
  structured heterogeneous populations}.
\newblock \emph{\bibinfo{journal}{Proceedings of the National Academy of
  Sciences}} \textbf{\bibinfo{volume}{103}}, \bibinfo{pages}{3490--3494}
  (\bibinfo{year}{2006}).

\bibitem{lieberman2005evolutionary}
\bibinfo{author}{Lieberman, E.}, \bibinfo{author}{Hauert, C.} \&
  \bibinfo{author}{Nowak, M.~A.}
\newblock \bibinfo{title}{Evolutionary dynamics on graphs}.
\newblock \emph{\bibinfo{journal}{Nature}} \textbf{\bibinfo{volume}{433}},
  \bibinfo{pages}{312--316} (\bibinfo{year}{2005}).

\bibitem{antal2006evolutionary}
\bibinfo{author}{Antal, T.}, \bibinfo{author}{Redner, S.} \&
  \bibinfo{author}{Sood, V.}
\newblock \bibinfo{title}{Evolutionary dynamics on degree-heterogeneous
  graphs}.
\newblock \emph{\bibinfo{journal}{Physical review letters}}
  \textbf{\bibinfo{volume}{96}}, \bibinfo{pages}{188104}
  (\bibinfo{year}{2006}).

\bibitem{broom2008analysis}
\bibinfo{author}{Broom, M.} \& \bibinfo{author}{Rycht{\'a}{\v{r}}, J.}
\newblock \bibinfo{title}{An analysis of the fixation probability of a mutant
  on special classes of non-directed graphs}.
\newblock \emph{\bibinfo{journal}{Proceedings of the Royal Society A:
  Mathematical, Physical and Engineering Science}}
  \textbf{\bibinfo{volume}{464}}, \bibinfo{pages}{2609--2627}
  (\bibinfo{year}{2008}).

\bibitem{diaz2014approximating}
\bibinfo{author}{D{\'\i}az, J.} \emph{et~al.}
\newblock \bibinfo{title}{Approximating fixation probabilities in the
  generalized moran process}.
\newblock \emph{\bibinfo{journal}{Algorithmica}} \textbf{\bibinfo{volume}{69}},
  \bibinfo{pages}{78--91} (\bibinfo{year}{2014}).

\bibitem{adlam2015amplifiers}
\bibinfo{author}{Adlam, B.}, \bibinfo{author}{Chatterjee, K.} \&
  \bibinfo{author}{Nowak, M.}
\newblock \bibinfo{title}{Amplifiers of selection}.
\newblock In \emph{\bibinfo{booktitle}{Proc. R. Soc. A}}, vol.
  \bibinfo{volume}{471}, \bibinfo{pages}{20150114} (\bibinfo{organization}{The
  Royal Society}, \bibinfo{year}{2015}).

\bibitem{monk2018martingales}
\bibinfo{author}{Monk, T.}
\newblock \bibinfo{title}{Martingales and the fixation probability of
  high-dimensional evolutionary graphs}.
\newblock \emph{\bibinfo{journal}{Journal of theoretical biology}}
  \textbf{\bibinfo{volume}{451}}, \bibinfo{pages}{10--18}
  (\bibinfo{year}{2018}).

\bibitem{allen2017evolutionary}
\bibinfo{author}{Allen, B.} \emph{et~al.}
\newblock \bibinfo{title}{Evolutionary dynamics on any population structure}.
\newblock \emph{\bibinfo{journal}{Nature}} \textbf{\bibinfo{volume}{544}},
  \bibinfo{pages}{227--230} (\bibinfo{year}{2017}).

\bibitem{santos2005scale}
\bibinfo{author}{Santos, F.~C.} \& \bibinfo{author}{Pacheco, J.~M.}
\newblock \bibinfo{title}{Scale-free networks provide a unifying framework for
  the emergence of cooperation}.
\newblock \emph{\bibinfo{journal}{Physical review letters}}
  \textbf{\bibinfo{volume}{95}}, \bibinfo{pages}{098104}
  (\bibinfo{year}{2005}).

\bibitem{keeling2011modeling}
\bibinfo{author}{Keeling, M.~J.} \& \bibinfo{author}{Rohani, P.}
\newblock \emph{\bibinfo{title}{Modeling infectious diseases in humans and
  animals}} (\bibinfo{publisher}{Princeton university press},
  \bibinfo{year}{2011}).

\bibitem{szabo2007evolutionary}
\bibinfo{author}{Szab{\'o}, G.} \& \bibinfo{author}{Fath, G.}
\newblock \bibinfo{title}{Evolutionary games on graphs}.
\newblock \emph{\bibinfo{journal}{Physics reports}}
  \textbf{\bibinfo{volume}{446}}, \bibinfo{pages}{97--216}
  (\bibinfo{year}{2007}).

\bibitem{castellano2009statistical}
\bibinfo{author}{Castellano, C.}, \bibinfo{author}{Fortunato, S.} \&
  \bibinfo{author}{Loreto, V.}
\newblock \bibinfo{title}{Statistical physics of social dynamics}.
\newblock \emph{\bibinfo{journal}{Reviews of modern physics}}
  \textbf{\bibinfo{volume}{81}}, \bibinfo{pages}{591} (\bibinfo{year}{2009}).

\bibitem{perc2013evolutionary}
\bibinfo{author}{Perc, M.}, \bibinfo{author}{G{\'o}mez-Gardenes, J.},
  \bibinfo{author}{Szolnoki, A.}, \bibinfo{author}{Flor{\'\i}a, L.~M.} \&
  \bibinfo{author}{Moreno, Y.}
\newblock \bibinfo{title}{Evolutionary dynamics of group interactions on
  structured populations: a review}.
\newblock \emph{\bibinfo{journal}{Journal of the royal society interface}}
  \textbf{\bibinfo{volume}{10}}, \bibinfo{pages}{20120997}
  (\bibinfo{year}{2013}).

\bibitem{broom2011stars}
\bibinfo{author}{Hadjichrysanthou, C.}, \bibinfo{author}{Broom, M.} \&
  \bibinfo{author}{Rycht{\'a}{\v{r}}, J.}
\newblock \bibinfo{title}{Evolutionary games on star graphs under various
  updating rules}.
\newblock \emph{\bibinfo{journal}{Dynamic Games and Applications}}
  \textbf{\bibinfo{volume}{1}}, \bibinfo{pages}{386} (\bibinfo{year}{2011}).

\bibitem{mertzios2013natural}
\bibinfo{author}{Mertzios, G.~B.}, \bibinfo{author}{Nikoletseas, S.},
  \bibinfo{author}{Raptopoulos, C.} \& \bibinfo{author}{Spirakis, P.~G.}
\newblock \bibinfo{title}{Natural models for evolution on networks}.
\newblock \emph{\bibinfo{journal}{Theoretical Computer Science}}
  \textbf{\bibinfo{volume}{477}}, \bibinfo{pages}{76--95}
  (\bibinfo{year}{2013}).

\bibitem{Galanis17}
\bibinfo{author}{Galanis, A.}, \bibinfo{author}{G{\"o}bel, A.},
  \bibinfo{author}{Goldberg, L.~A.}, \bibinfo{author}{Lapinskas, J.} \&
  \bibinfo{author}{Richerby, D.}
\newblock \bibinfo{title}{Amplifiers for the moran process}.
\newblock \emph{\bibinfo{journal}{Journal of the ACM (JACM)}}
  \textbf{\bibinfo{volume}{64}}, \bibinfo{pages}{5} (\bibinfo{year}{2017}).

\bibitem{tkadlec2019population}
\bibinfo{author}{Tkadlec, J.}, \bibinfo{author}{Pavlogiannis, A.},
  \bibinfo{author}{Chatterjee, K.} \& \bibinfo{author}{Nowak, M.~A.}
\newblock \bibinfo{title}{Population structure determines the tradeoff between
  fixation probability and fixation time}.
\newblock \emph{\bibinfo{journal}{Communications biology}}
  \textbf{\bibinfo{volume}{2}}, \bibinfo{pages}{1--8} (\bibinfo{year}{2019}).

\bibitem{allen2021fixation}
\bibinfo{author}{Allen, B.} \emph{et~al.}
\newblock \bibinfo{title}{Fixation probabilities in graph-structured
  populations under weak selection}.
\newblock \emph{\bibinfo{journal}{PLoS computational biology}}
  \textbf{\bibinfo{volume}{17}}, \bibinfo{pages}{e1008695}
  (\bibinfo{year}{2021}).

\bibitem{monk2014martingales}
\bibinfo{author}{Monk, T.}, \bibinfo{author}{Green, P.} \&
  \bibinfo{author}{Paulin, M.}
\newblock \bibinfo{title}{Martingales and fixation probabilities of
  evolutionary graphs}.
\newblock \emph{\bibinfo{journal}{Proceedings of the Royal Society A:
  Mathematical, Physical and Engineering Science}}
  \textbf{\bibinfo{volume}{470}}, \bibinfo{pages}{20130730}
  (\bibinfo{year}{2014}).

\bibitem{pavlogiannis2018construction}
\bibinfo{author}{Pavlogiannis, A.}, \bibinfo{author}{Tkadlec, J.},
  \bibinfo{author}{Chatterjee, K.} \& \bibinfo{author}{Nowak, M.~A.}
\newblock \bibinfo{title}{Construction of arbitrarily strong amplifiers of
  natural selection using evolutionary graph theory}.
\newblock \emph{\bibinfo{journal}{Communications biology}}
  \textbf{\bibinfo{volume}{1}}, \bibinfo{pages}{1--8} (\bibinfo{year}{2018}).

\bibitem{Goldberg19}
\bibinfo{author}{Goldberg, L.~A.} \emph{et~al.}
\newblock \bibinfo{title}{Asymptotically optimal amplifiers for the moran
  process}.
\newblock \emph{\bibinfo{journal}{Theoretical Computer Science}}
  \textbf{\bibinfo{volume}{758}}, \bibinfo{pages}{73--93}
  (\bibinfo{year}{2019}).

\bibitem{tkadlec2021fast}
\bibinfo{author}{Tkadlec, J.}, \bibinfo{author}{Pavlogiannis, A.},
  \bibinfo{author}{Chatterjee, K.} \& \bibinfo{author}{Nowak, M.~A.}
\newblock \bibinfo{title}{Fast and strong amplifiers of natural selection}.
\newblock \emph{\bibinfo{journal}{Nature Communications}}
  \textbf{\bibinfo{volume}{12}}, \bibinfo{pages}{1--6} (\bibinfo{year}{2021}).

\bibitem{frantz2010extracellular}
\bibinfo{author}{Frantz, C.}, \bibinfo{author}{Stewart, K.~M.} \&
  \bibinfo{author}{Weaver, V.~M.}
\newblock \bibinfo{title}{The extracellular matrix at a glance}.
\newblock \emph{\bibinfo{journal}{Journal of cell science}}
  \textbf{\bibinfo{volume}{123}}, \bibinfo{pages}{4195--4200}
  (\bibinfo{year}{2010}).

\bibitem{hay2013cell}
\bibinfo{author}{Hay, E.~D.}
\newblock \emph{\bibinfo{title}{Cell biology of extracellular matrix}}
  (\bibinfo{publisher}{Springer Science \& Business Media},
  \bibinfo{year}{2013}).

\bibitem{walker2018role}
\bibinfo{author}{Walker, C.}, \bibinfo{author}{Mojares, E.} \&
  \bibinfo{author}{del R{\'\i}o~Hern{\'a}ndez, A.}
\newblock \bibinfo{title}{Role of extracellular matrix in development and
  cancer progression}.
\newblock \emph{\bibinfo{journal}{International journal of molecular sciences}}
  \textbf{\bibinfo{volume}{19}}, \bibinfo{pages}{3028} (\bibinfo{year}{2018}).

\bibitem{gibson2009cell}
\bibinfo{author}{Gibson, W.~T.} \& \bibinfo{author}{Gibson, M.~C.}
\newblock \bibinfo{title}{Cell topology, geometry, and morphogenesis in
  proliferating epithelia}.
\newblock \emph{\bibinfo{journal}{Current topics in developmental biology}}
  \textbf{\bibinfo{volume}{89}}, \bibinfo{pages}{87--114}
  (\bibinfo{year}{2009}).

\bibitem{kachalo2015mechanical}
\bibinfo{author}{Kachalo, S.}, \bibinfo{author}{Naveed, H.},
  \bibinfo{author}{Cao, Y.}, \bibinfo{author}{Zhao, J.} \&
  \bibinfo{author}{Liang, J.}
\newblock \bibinfo{title}{Mechanical model of geometric cell and topological
  algorithm for cell dynamics from single-cell to formation of monolayered
  tissues with pattern}.
\newblock \emph{\bibinfo{journal}{PloS one}} \textbf{\bibinfo{volume}{10}},
  \bibinfo{pages}{e0126484} (\bibinfo{year}{2015}).

\bibitem{radisky2002order}
\bibinfo{author}{Radisky, D.}, \bibinfo{author}{Muschler, J.} \&
  \bibinfo{author}{Bissell, M.~J.}
\newblock \bibinfo{title}{Order and disorder: the role of extracellular matrix
  in epithelial cancer}.
\newblock \emph{\bibinfo{journal}{Cancer investigation}}
  \textbf{\bibinfo{volume}{20}}, \bibinfo{pages}{139--153}
  (\bibinfo{year}{2002}).

\bibitem{nelson2006extracellular}
\bibinfo{author}{Nelson, C.~M.} \& \bibinfo{author}{Bissell, M.~J.}
\newblock \bibinfo{title}{Of extracellular matrix, scaffolds, and signaling:
  tissue architecture regulates development, homeostasis, and cancer}.
\newblock \emph{\bibinfo{journal}{Annu. Rev. Cell Dev. Biol.}}
  \textbf{\bibinfo{volume}{22}}, \bibinfo{pages}{287--309}
  (\bibinfo{year}{2006}).

\bibitem{brauchle2018biomechanical}
\bibinfo{author}{Brauchle, E.} \emph{et~al.}
\newblock \bibinfo{title}{Biomechanical and biomolecular characterization of
  extracellular matrix structures in human colon carcinomas}.
\newblock \emph{\bibinfo{journal}{Matrix Biology}}
  \textbf{\bibinfo{volume}{68}}, \bibinfo{pages}{180--193}
  (\bibinfo{year}{2018}).

\bibitem{spirakis2021extension}
\bibinfo{author}{Spirakis, P.}, \bibinfo{author}{Melissourgos, T.},
  \bibinfo{author}{Nikoletseas, S.} \& \bibinfo{author}{Raptopoulos, C.}
\newblock \bibinfo{title}{An extension of the moran process using type-specific
  connection graphs.}
\newblock \emph{\bibinfo{journal}{Journal of Computer and System Sciences}}
  (\bibinfo{year}{2021}).

\bibitem{comins1980evolutionarily}
\bibinfo{author}{Comins, H.~N.}, \bibinfo{author}{Hamilton, W.~D.} \&
  \bibinfo{author}{May, R.~M.}
\newblock \bibinfo{title}{Evolutionarily stable dispersal strategies}.
\newblock \emph{\bibinfo{journal}{Journal of theoretical Biology}}
  \textbf{\bibinfo{volume}{82}}, \bibinfo{pages}{205--230}
  (\bibinfo{year}{1980}).

\bibitem{dieckmann1999evolutionary}
\bibinfo{author}{Dieckmann, U.}, \bibinfo{author}{O'Hara, B.} \&
  \bibinfo{author}{Weisser, W.}
\newblock \bibinfo{title}{The evolutionary ecology of dispersal}.
\newblock \emph{\bibinfo{journal}{Trends in Ecology \& Evolution}}
  \textbf{\bibinfo{volume}{14}}, \bibinfo{pages}{88--90}
  (\bibinfo{year}{1999}).

\bibitem{hutson2003evolution}
\bibinfo{author}{Hutson, V.}, \bibinfo{author}{Martinez, S.},
  \bibinfo{author}{Mischaikow, K.} \& \bibinfo{author}{Vickers, G.~T.}
\newblock \bibinfo{title}{The evolution of dispersal}.
\newblock \emph{\bibinfo{journal}{Journal of mathematical biology}}
  \textbf{\bibinfo{volume}{47}}, \bibinfo{pages}{483--517}
  (\bibinfo{year}{2003}).

\bibitem{levin2003ecology}
\bibinfo{author}{Levin, S.~A.}, \bibinfo{author}{Muller-Landau, . H.~C.},
  \bibinfo{author}{Nathan, .~R.} \& \bibinfo{author}{Chave, .~J.}
\newblock \bibinfo{title}{The ecology and evolution of seed dispersal: a
  theoretical perspective}.
\newblock \emph{\bibinfo{journal}{Annual Review of Ecology, Evolution, and
  Systematics}} \textbf{\bibinfo{volume}{34}}, \bibinfo{pages}{575--604}
  (\bibinfo{year}{2003}).

\bibitem{ronce2007does}
\bibinfo{author}{Ronce, O.}
\newblock \bibinfo{title}{How does it feel to be like a rolling stone? ten
  questions about dispersal evolution}.
\newblock \emph{\bibinfo{journal}{Annu. Rev. Ecol. Evol. Syst.}}
  \textbf{\bibinfo{volume}{38}}, \bibinfo{pages}{231--253}
  (\bibinfo{year}{2007}).

\bibitem{may1994superinfection}
\bibinfo{author}{May, R.~M.} \& \bibinfo{author}{Nowak, M.~A.}
\newblock \bibinfo{title}{Superinfection, metapopulation dynamics, and the
  evolution of diversity}.
\newblock \emph{\bibinfo{journal}{Journal of Theoretical Biology}}
  \textbf{\bibinfo{volume}{170}}, \bibinfo{pages}{95--114}
  (\bibinfo{year}{1994}).

\bibitem{olivieri1995metapopulation}
\bibinfo{author}{Olivieri, I.}, \bibinfo{author}{Michalakis, Y.} \&
  \bibinfo{author}{Gouyon, P.-H.}
\newblock \bibinfo{title}{Metapopulation genetics and the evolution of
  dispersal}.
\newblock \emph{\bibinfo{journal}{The American Naturalist}}
  \textbf{\bibinfo{volume}{146}}, \bibinfo{pages}{202--228}
  (\bibinfo{year}{1995}).

\bibitem{heino2001evolution}
\bibinfo{author}{Heino, M.} \& \bibinfo{author}{Hanski, I.}
\newblock \bibinfo{title}{Evolution of migration rate in a spatially realistic
  metapopulation model}.
\newblock \emph{\bibinfo{journal}{The American Naturalist}}
  \textbf{\bibinfo{volume}{157}}, \bibinfo{pages}{495--511}
  (\bibinfo{year}{2001}).

\bibitem{ohtsuki2007evolutionary}
\bibinfo{author}{Ohtsuki, H.}, \bibinfo{author}{Pacheco, J.~M.} \&
  \bibinfo{author}{Nowak, M.~A.}
\newblock \bibinfo{title}{Evolutionary graph theory: Breaking the symmetry
  between interaction and replacement}.
\newblock \emph{\bibinfo{journal}{Journal of Theoretical Biology}}
  \textbf{\bibinfo{volume}{246}}, \bibinfo{pages}{681--694}
  (\bibinfo{year}{2007}).

\bibitem{thalhauser2010selection}
\bibinfo{author}{Thalhauser, C.~J.}, \bibinfo{author}{Lowengrub, J.~S.},
  \bibinfo{author}{Stupack, D.} \& \bibinfo{author}{Komarova, N.~L.}
\newblock \bibinfo{title}{Selection in spatial stochastic models of cancer:
  Migration as a key modulator of fitness}.
\newblock \emph{\bibinfo{journal}{Biology Direct}}
  \textbf{\bibinfo{volume}{5}}, \bibinfo{pages}{1--17} (\bibinfo{year}{2010}).

\bibitem{manem2014spatial}
\bibinfo{author}{Manem, V.~S.}, \bibinfo{author}{Kohandel, M.},
  \bibinfo{author}{Komarova, N.} \& \bibinfo{author}{Sivaloganathan, S.}
\newblock \bibinfo{title}{Spatial invasion dynamics on random and unstructured
  meshes: Implications for heterogeneous tumor populations}.
\newblock \emph{\bibinfo{journal}{Journal of theoretical biology}}
  \textbf{\bibinfo{volume}{349}}, \bibinfo{pages}{66--73}
  (\bibinfo{year}{2014}).

\bibitem{krieger2017effects}
\bibinfo{author}{Krieger, M.~S.}, \bibinfo{author}{McAvoy, A.} \&
  \bibinfo{author}{Nowak, M.~A.}
\newblock \bibinfo{title}{Effects of motion in structured populations}.
\newblock \emph{\bibinfo{journal}{Journal of The Royal Society Interface}}
  \textbf{\bibinfo{volume}{14}}, \bibinfo{pages}{20170509}
  (\bibinfo{year}{2017}).

\bibitem{herrerias2019motion}
\bibinfo{author}{Herrer{\'\i}as-Azcu{\'e}, F.},
  \bibinfo{author}{P{\'e}rez-Mu{\~n}uzuri, V.} \& \bibinfo{author}{Galla, T.}
\newblock \bibinfo{title}{Motion, fixation probability and the choice of an
  evolutionary process}.
\newblock \emph{\bibinfo{journal}{PLoS computational biology}}
  \textbf{\bibinfo{volume}{15}}, \bibinfo{pages}{e1007238}
  (\bibinfo{year}{2019}).

\bibitem{waclaw2015spatial}
\bibinfo{author}{Waclaw, B.} \emph{et~al.}
\newblock \bibinfo{title}{A spatial model predicts that dispersal and cell
  turnover limit intratumour heterogeneity}.
\newblock \emph{\bibinfo{journal}{Nature}} \textbf{\bibinfo{volume}{525}},
  \bibinfo{pages}{261--264} (\bibinfo{year}{2015}).

\bibitem{manem2015modeling}
\bibinfo{author}{Manem, V.~S.}, \bibinfo{author}{Kaveh, K.},
  \bibinfo{author}{Kohandel, M.} \& \bibinfo{author}{Sivaloganathan, S.}
\newblock \bibinfo{title}{Modeling invasion dynamics with spatial
  random-fitness due to micro-environment}.
\newblock \emph{\bibinfo{journal}{PLoS One}} \textbf{\bibinfo{volume}{10}},
  \bibinfo{pages}{e0140234} (\bibinfo{year}{2015}).

\bibitem{broom2010two}
\bibinfo{author}{Broom, M.}, \bibinfo{author}{Hadjichrysanthou, C.},
  \bibinfo{author}{Rycht{\'a}{\v{r}}, J.} \& \bibinfo{author}{Stadler, B.}
\newblock \bibinfo{title}{Two results on evolutionary processes on general
  non-directed graphs}.
\newblock \emph{\bibinfo{journal}{Proceedings of the Royal Society A:
  Mathematical, Physical and Engineering Sciences}}
  \textbf{\bibinfo{volume}{466}}, \bibinfo{pages}{2795--2798}
  (\bibinfo{year}{2010}).

\bibitem{hindersin2016exact}
\bibinfo{author}{Hindersin, L.}, \bibinfo{author}{M{\"o}ller, M.},
  \bibinfo{author}{Traulsen, A.} \& \bibinfo{author}{Bauer, B.}
\newblock \bibinfo{title}{Exact numerical calculation of fixation probability
  and time on graphs}.
\newblock \emph{\bibinfo{journal}{Biosystems}} \textbf{\bibinfo{volume}{150}},
  \bibinfo{pages}{87--91} (\bibinfo{year}{2016}).

\bibitem{moller2019exploring}
\bibinfo{author}{M{\"o}ller, M.}, \bibinfo{author}{Hindersin, L.} \&
  \bibinfo{author}{Traulsen, A.}
\newblock \bibinfo{title}{Exploring and mapping the universe of evolutionary
  graphs identifies structural properties affecting fixation probability and
  time}.
\newblock \emph{\bibinfo{journal}{Communications biology}}
  \textbf{\bibinfo{volume}{2}}, \bibinfo{pages}{1--9} (\bibinfo{year}{2019}).

\bibitem{pavlogiannis2017amplification}
\bibinfo{author}{Pavlogiannis, A.}, \bibinfo{author}{Tkadlec, J.},
  \bibinfo{author}{Chatterjee, K.} \& \bibinfo{author}{Nowak, M.~A.}
\newblock \bibinfo{title}{Amplification on undirected population structures:
  comets beat stars}.
\newblock \emph{\bibinfo{journal}{Scientific reports}}
  \textbf{\bibinfo{volume}{7}}, \bibinfo{pages}{1--8} (\bibinfo{year}{2017}).

\bibitem{mcavoy2021fixation}
\bibinfo{author}{McAvoy, A.} \& \bibinfo{author}{Allen, B.}
\newblock \bibinfo{title}{Fixation probabilities in evolutionary dynamics under
  weak selection}.
\newblock \emph{\bibinfo{journal}{Journal of Mathematical Biology}}
  \textbf{\bibinfo{volume}{82}}, \bibinfo{pages}{1--41} (\bibinfo{year}{2021}).

\bibitem{ann2020phase}
\bibinfo{author}{Ann~Goldberg, L.}, \bibinfo{author}{Lapinskas, J.} \&
  \bibinfo{author}{Richerby, D.}
\newblock \bibinfo{title}{Phase transitions of the moran process and
  algorithmic consequences}.
\newblock \emph{\bibinfo{journal}{Random Structures \& Algorithms}}
  \textbf{\bibinfo{volume}{56}}, \bibinfo{pages}{597--647}
  (\bibinfo{year}{2020}).

\end{thebibliography}


 \section*{Appendix}
In this appendix, we list the statements of our Theorems and their formal proofs.
In~\cref{sec:minusedge} we prove that the Complete graph is locally optimal 
 (\cref{thm:minusedge}).
In~\cref{sec:bipartite} we prove that for the (complete) Bipartite graph $B_N$ both 
 $\fp(B_N)$ and $\fpx(B_N)$ are within a constant factor of $1/N$ (\cref{thm:bipartite}).
Finally, in~\cref{sec:app-1d-lattices} we prove that for 1-dimensional lattices with dispersal radii 2 and 1,
the fixation probability $\fponed(4,2)$ is bounded away from 0 (\cref{thm:1d-lattices}).

 %
 %
\section{The Complete graph is locally optimal}\label{sec:minusedge}
Here we show that when it comes to $\fp(G_N)$ and $\fpx(G_N)$, the Complete graph is locally optimal.
That is, we show that the graph $M_N$ obtained from $K_N$ by removing a single edge satisfies
$\fp(M_N)=\fp(M_N,K_N)<1/N$ and $\fpx(M_N)=\fp(K_N,M_N)>1/N$.

\begin{theorem}[Complete graph is locally optimal]\label{thm:minusedge}
Fix $N\ge 2$ and let $M_N$ be a graph obtained from the Complete graph $K_N$ by removing a single edge.
Then
\[ \fp(M_N,K_N)=\frac{N-2}{(N-1)^2}<\frac1N \qquad\text{and}\qquad \fp(K_N,M_N)=\frac{1}{N-1}>\frac1N.
\]
\end{theorem}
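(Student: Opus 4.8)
The plan is to collapse each $2^N$-state Markov chain to a two-dimensional chain by exploiting symmetry, and then to solve the resulting recurrence. Throughout I use that in the neutral regime the denominator of the transition probabilities equals $N$, so a single step consists of choosing a reproducing individual uniformly at random and dispersing its offspring along that individual's own graph: a mutant at $j$ replaces $i$ with probability $m^A_{ji}$ and a resident with probability $m^B_{ji}$. It is worth recalling why the value is not simply $1/N$: when $G^A=G^B$ a suitably weighted mutant count $\sum_i v_i n_i$ is a martingale and recovers the known value $1/N$, but as soon as $M^A\neq M^B$ this weighted count retains configuration-dependent quadratic terms and ceases to be a martingale, so $\fp$ genuinely leaves $1/N$ and has to be computed.

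Let $u,v$ be the endpoints of the edge removed from $K_N$ (the \emph{special} vertices) and call the remaining $N-2$ vertices \emph{ordinary}. Both ordered pairs $(M_N,K_N)$ and $(K_N,M_N)$ are invariant under the transposition $(u\,v)$ together with every permutation of the ordinary vertices, so the fixation probability from a configuration depends only on $(a,b)$, where $a\in\{0,1,2\}$ is the number of mutants among $\{u,v\}$ and $b\in\{0,\dots,N-2\}$ the number among the ordinary vertices. Writing $f(a,b)$ for the fixation probability from the reduced state $(a,b)$, with absorbing states $(0,0)$ (extinction) and $(2,N-2)$ (fixation), the quantity sought is $\tfrac2N f(1,0)+\tfrac{N-2}{N}f(0,1)$, since a uniformly placed single mutant is special with probability $2/N$.

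I would then write out the four transition rates explicitly. The decisive simplification is that one of the graphs is complete. For $\fp(M_N,K_N)$ the residents occupy $K_N$, so a reproducing resident replaces a uniformly random mutant and the total death rate depends only on $|S|=a+b$; only the births feel the missing edge, since in $M_N$ the special vertices have $\deg=N-2$ and the absent $u\!\leftrightarrow\!v$ link suppresses exactly that migration. For $\fp(K_N,M_N)$ the roles reverse: the births become uniform and only the deaths are shaped by $M_N$. In either case the four rates --- up-special, up-ordinary, down-special, down-ordinary --- are elementary functions of $a$, $b$, $N$, so $f$ satisfies a closed harmonic system $f(a,b)=\sum P\,f$ together with the two boundary values.

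The main obstacle is solving this system in closed form for every $N$. The chain does not decouple: the three layers $a=0,1,2$ are linked through the special up/down moves, so a single one-dimensional gambler's-ruin computation in $b$ will not do. I would either eliminate the special layer to reduce to a solvable recurrence in $b$, or posit an ansatz --- $f(a,b)$ rational of low degree in $b$ with $N$-dependent coefficients on each layer --- fix the coefficients from the boundary conditions and the inter-layer matching, and then check that the ansatz satisfies the recurrence identically in $N$; the fully explicit case $N=3$ fixes the form and guards against algebra errors. Substituting the solution into $\tfrac2N f(1,0)+\tfrac{N-2}{N}f(0,1)$ yields the two closed forms, and comparison with $1/N$ gives the stated inequalities.
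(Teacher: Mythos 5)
Your proposal takes essentially the same route as the paper's proof: reduce the chain by symmetry to the two coordinates counting mutants on the special pair and on the ordinary vertices, then determine the resulting harmonic function by a low-degree ansatz fixed by the boundary conditions and verified against the one-step recurrence. The paper carries this out explicitly, exhibiting the solution $\varphi(a,b)=\bigl((a+b)(N-2)+\tfrac12 b(b-1)\bigr)/(N-1)^2$ (where $b\in\{0,1,2\}$ counts mutants on the special pair) and checking that its expected one-step change vanishes, which is precisely the form and verification your plan anticipates; it also turns out that the two starting positions give equal values, so the $\tfrac2N,\tfrac{N-2}N$ weighting collapses.
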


\begin{proof} Fix $N\ge 2$. First we focus on $\fp(M_N)=\fp(M_N,K_N)$.
Denote by $\fii(a,b)$ the fixation probability starting from a configuration with 
$a$ mutants among the $N-2$ fully connected vertices 
and $b$ mutants among the other two vertices that miss one edge. 
We claim that
\[\fii(a,b)=\frac{(a+b)(n-2)+\frac12b(b-1)}{(n-1)^2}.
\]

Clearly, the formula satisfies $\fii(0,0)=0$ and $\fii(n-2,2)=1$. Therefore it suffices to check that given an arbitrary configuration (that is, any $0\le a\le n-2$ and $0\le b\le 2$), the expected value of the formula after a single transition doesn't change.

The transition probabilities are as follows:

\begin{enumerate}
\item[(i)] $p_{b+}\equiv\Pr[(a,b)\to (a,b+1)]=
\frac{a}{n}\cdot \frac{2-b}{n-1}$
\item[(ii)] $p_{a+}\equiv\Pr[(a,b)\to (a+1,b)]=
\frac{a}{n}\cdot \frac{n-2-a}{n-1} + \frac{b}{n}\cdot \frac{n-2-a}{n-2}$
\item[(iii)] $p_{b-}\equiv\Pr[(a,b)\to (a,b-1)]=
\frac{n-a-b}{n}\cdot \frac{b}{n-1}$
\item[(iv)] $p_{a-}\equiv\Pr[(a,b)\to (a-1,b)]=
\frac{n-a-b}{n}\cdot \frac{a}{n-1}$
\end{enumerate}

Ignoring the shared denominator $(n-1)^2$, the expected change in the value produced by the formula in the respective cases is as follows:
\begin{enumerate}
\item[(i)] $\Delta\fii_{b+}= +\,n-2+b$
\item[(ii)] $\Delta\fii_{a+}= +\,n-2$
\item[(iii)] $\Delta\fii_{b-}= -(n-2+b-1)$
\item[(iv)] $\Delta\fii_{a+}= -(n-2)$
\end{enumerate}

Denoting $I=\{b+,a+,b-,a-\}$ we compute

\begin{align*}
\Delta\fii&\equiv \sum_{i\in I} p_i\cdot\Delta\fii_i=
\frac1{n(n-1)}\cdot\Big[ a(2-b)(n-2+b) + a(n-2-a)(n-2) \\
&+ b(n-2-a)(n-1) 
- (n-a-b)b(n-2+b-1) -(n-a-b)a(n-2)
\Big]
\end{align*}
Grouping the terms with $n$ raised to the same power, the terms in the square brackets on the right-hand side can be rearranged as $n^2\cdot X+n\cdot Y+Z$, where $X=a+b-b-a=0$,

\begin{align*}
Y&= (2a-ab)+(-2a-a^2-2a) + (-2b-ab-b) \\
 &\hspace{1.87cm}- (-ab-b^2-2b+b^2-b) - (-2a-a^2-ab)\\
&= 0,
\end{align*}
and

\begin{align*}
Z&= (-ab^2+4ab-4a) +2a^2 + (ab+2b) +(b^3+ab^2-3b^2-3ab) -(2a^2+2ab)\\
&=b^3-3b^2+2b=b(b-1)(b-2).
\end{align*}
Since $Z=0$ for any admissible integer $b$ (recall $0\le b\le 2$), we are done.

 Regarding $\fpx(M_N)$, a completely analogous proof
 establishes a formula
\[\fii(a,b)=\frac{(a+b)(n-2)+b(3-b)/2}{(n-1)^2},
\]
which implies the desired
\[
\fpx(M_N)=\frac{1\cdot(n-2)+\frac{1\cdot 2}{2}}{(n-1)^2} = \frac1{n-1}.\qedhere
\]
\end{proof}

%
%
\section{Bipartite graph $B_N$ is within a constant factor of $K_N$}\label{sec:bipartite}
Recall that for $N$ even we denote by $B_N$ the (complete) Bipartite graph with parts of sizes $N/2$ and $N/2$.
We prove that for $N$ large, both $\fp(B_N)$ and $\fpx(B_N)$  are within a constant factor of $1/N$.
Recall that numerical experiments in the main text suggest that $\fp(B_N)\approx 0.82/N$ and $\fpx(B_N)\approx 1.1/N$, as $N\to\infty$.
\begin{theorem}[Bipartite graph $B_N$]\label{thm:bipartite} Fix $N\ge 2$. Then
\[
\fp(B_N)> \frac1{e-1}\cdot\frac1N \quad\text{and}\quad \fpx(B_N)<\frac e{e-1}\cdot\frac1N.
\]
\end{theorem}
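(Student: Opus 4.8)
The plan is to collapse the two-dimensional state to a birth--death chain for the total number of mutants, compare it with a well-mixed Moran process of slightly shifted fitness, and finish with the exact formula $\fpknr=\frac{1-1/r}{1-1/r^N}$.

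First I would record the one-step transition probabilities in the coordinates $(a,b)=(N_L,N_R)$, writing $m=a+b$ for the total number of mutants. Under neutral drift every individual is chosen to reproduce with probability $1/N$, so a direct count gives, for $\fp(B_N)$ (mutants on $B_N$, residents on $K_N$),
\[
p^-=\frac{m(N-m)}{N(N-1)},\qquad p^+=\frac{m}{N}-\frac{4ab}{N^2}.
\]
The structural point is that $p^-$ is \emph{exactly} the well-mixed value and is independent of the split $(a,b)$, whereas $p^+$ carries a penalty $4ab/N^2$ that is largest for a balanced split. By AM--GM $4ab\le m^2$, so $p^+\ge \frac{m(N-m)}{N^2}$ uniformly over all configurations with the same $m$. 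For $\fpx(B_N)$ the roles are dual: $p^+=\frac{m(N-m)}{N(N-1)}$ is config-independent, while $p^-=\frac{m}{N}-\frac{4ab}{N^2}\ge\frac{m(N-m)}{N^2}$.

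Next I would convert these one-sided bounds into (super/sub)martingales. Set $\gamma=\frac{N}{N-1}$ and $h(m)=\frac{\gamma^m-1}{\gamma^N-1}$, which is exactly the fixation function of the well-mixed Moran process with $r=1/\gamma$ and satisfies $h(m+1)-h(m)=\gamma\,[h(m)-h(m-1)]$. Along the true $\fp(B_N)$-process,
\[
\E[\Delta h\mid(a,b)]=\big(h(m)-h(m-1)\big)\big(\gamma p^+-p^-\big)\ge 0,
\]
since $\gamma p^+\ge\gamma\cdot\frac{m(N-m)}{N^2}=p^-$; hence $h(m_t)$ is a submartingale, and optional stopping yields $\fp(B_N)\ge h(1)=\fp(K_N;\tfrac{N-1}{N})=\frac{\gamma-1}{\gamma^N-1}$. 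Dually, $\tilde h(m)=\frac{1-\gamma^{-m}}{1-\gamma^{-N}}$ (the Moran fixation function for $r=\gamma$) is a \emph{super}martingale for the $\fpx(B_N)$-process because $p^-\ge\gamma^{-1}p^+$, so $\fpx(B_N)\le \tilde h(1)=\fp(K_N;\gamma)=\frac{1/N}{1-(1-1/N)^N}$.

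It then remains to bound these two explicit quantities. The $\fpx$ bound is immediate: $(1-1/N)^N<1/e$ for all $N$ gives $1-(1-1/N)^N>\frac{e-1}{e}$ and hence $\fpx(B_N)<\frac{e}{e-1}\cdot\frac1N$. The $\fp$ bound is more delicate, because the target $\frac{1}{e-1}\cdot\frac1N$ sits right at the boundary: the crude estimate $\big(1+\frac1{N-1}\big)^{N-1}<e$ only yields $\frac{1}{(e-1)N+1}$. I expect the sole real obstacle to be the sharper elementary inequality $\big(1+\frac1{N-1}\big)^{N-1}<e-\frac1N$ (equivalently $\sum_{i=0}^{N-1}\gamma^i<(e-1)N$), which I would obtain from the expansion $k\log(1+\frac1k)=1-\frac1{2k}+O(\frac1{k^2})$ --- so that $e-(1+\frac1k)^k\sim\frac{e}{2k}$ exceeds $\frac1N$ --- together with a direct check of the finitely many small $N$. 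Strictness of both inequalities follows since the penalty $4ab/N^2$ is strictly positive at reachable configurations, so the (super/sub)martingale drift is strict somewhere.
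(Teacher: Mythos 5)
Your proposal follows essentially the same route as the paper's proof: you collapse the state to the total mutant count $m=a+b$, compute the same transition probabilities, use the same AM--GM bound $4ab\le m^2$, and compare against a constant-ratio birth--death chain with $\gamma=N/(N-1)$. The only cosmetic difference is that you implement the comparison via explicit sub/supermartingales and optional stopping, where the paper invokes the standard one-dimensional absorption formula after bounding $p^-/p^+\le \gamma$ (resp.\ $\ge \gamma^{-1}$); these give the identical estimates $\fp(B_N)\ge\frac{\gamma-1}{\gamma^N-1}$ and $\fpx(B_N)\le\frac{1/N}{1-(1-1/N)^N}$, and your treatment of the $\fpx$ bound then coincides with the paper's.

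Where you genuinely diverge is the final arithmetic step for $\fp(B_N)$, and there your extra care is not pedantry: the paper's own justification of that step is flawed. The paper asserts the conclusion ``since $t-1=1/(N-1)>1/N$ and $t^N\ge e$'' (with $t=\gamma$), but $t^N\ge e$ bounds the \emph{denominator} of $\frac{t-1}{t^N-1}$ from below, which is the wrong direction; moreover, once the numerator is weakened to $1/N$ the target becomes unreachable (already at $N=2$ one has $\frac{t-1}{t^N-1}=\frac13>\frac{1}{2(e-1)}$ while $\frac{1/N}{t^N-1}=\frac16<\frac{1}{2(e-1)}$). As you observe, the crude bound $\bigl(1+\frac1{N-1}\bigr)^{N-1}<e$ only yields $\fp(B_N)>\frac{1}{(e-1)N+1}$, and the strict claim of the theorem needs exactly your sharper inequality $\bigl(1+\frac1{N-1}\bigr)^{N-1}<e-\frac1N$, which is equivalent to $\sum_{i=0}^{N-1}\gamma^i<(e-1)N$. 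Your plan for proving it---the expansion $e-(1+\frac1k)^k\sim\frac{e}{2k}$ with $\frac e2>1$, plus checking small $N$---is sound and elementary, but to be airtight you must make the expansion effective (e.g.\ $1-k\ln(1+\frac1k)\ge\frac1{2k}-\frac1{3k^2}$ together with $1-e^{-y}\ge y-\frac{y^2}2$), so that ``finitely many small $N$'' is an explicit, checkable list. One small correction: your closing remark that strictness follows because the penalty $4ab/N^2$ is strictly positive at reachable configurations is not quite right (it vanishes when all mutants sit in one part), but it is also unnecessary---strictness already follows from the strict inequalities $\bigl(1+\frac1{N-1}\bigr)^{N-1}<e-\frac1N$ and $(1-1/N)^N<1/e$. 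With the effective version of the sharp inequality supplied, your write-up proves the theorem, whereas the paper's as-written final step for the $\fp$ bound does not.
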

\begin{proof}
First we bound $\fp(B_N)$.
Consider a time point at which there are $n$ mutants in total: $k$ of them in one part and $n-k$ in the other part. The probability $p^+(k,n-k)$ that in the next time step we gain a mutant equals
\begin{align*}
p^+(k,n-k) &=\frac kN\cdot\frac{N/2-(n-k)}{N/2} +\frac{n-k}N\cdot\frac{N/2-k}{N/2} \\
&= \frac{ k(N-2n+2k)+(n-k)(N-2k)}{N^2}\\
&=\frac{n\cdot N-4k(n-k)}{N^2}.
\end{align*}
Since $4k(n-k)\le n^2$ for any $k=0,\dots,n$ (and with equality only for $k=n/2$), we can bound
\[
p^+(k,n-k)\ge \frac{n\cdot N-n^2}{N^2}=\frac{n(N-n)}{N^2}.
\]
On the other hand, the probability $p^-(k,n-k)$ that in the next time step we lose a mutant equals
\[
p^-(k,n-k)=\frac{N-n}N\cdot\frac{n}{N-1}
\]
and thus
\[ \frac{p^-(k,n-k)}{p^+(k,n-k)}\le \frac N{N-1}\equiv t
\]
Since this expression is independent of $k$ and $n$, plugging it in the standard formula for the absorption probability of a 1-dimensional Markov chain we get
\[
\fp(B_N)\ge \frac1{1+ \sum_{i=i}^{N-1} t^i} =\frac{t-1}{t^N-1}.
\]
Since $t-1=1/(N-1)>1/N$ and
\[
t^N=\left(\frac N{N-1}\right)^N=\left(1+\frac 1{N-1}\right)^N\ge e,
\]
 where $e\doteq2.718$ is the famous Euler's constant, we thus obtain
\[
\fp(B_N)> \frac1{e-1}\cdot \frac1N>\frac{0.58}N.
\]

Regarding $\fpx(B_N)$, a completely analogous proof yields

\begin{align*}
p^+(k,n-k) &= \frac{n}N\cdot\frac{N-n}{N-1}\quad\text{and} \\
p^-(k,n-k) &= \frac{n\cdot N-4k(n-k)}{N^2}\ge \frac{n(N-n)}{N^2},
\end{align*}
Therefore
\[
\frac{p^-(k,n-k)}{p^+(k,n-k)}\ge \frac {N-1}{N} \equiv t.
\]
Finally, since $t^N=(1-1/N)^N\le 1/e$, we get
\[\fpx(B_N) \le \frac{t-1}{t^N-1} \le\frac{1/N}{1-1/e} = \frac e{e-1}\cdot\frac1N.\qedhere
\]
\end{proof}

%
%
\section{1-D Lattices}\label{sec:app-1d-lattices}
Recall that for a fixed even integer $d$ we denote by $\Cir^d_N$ the graph whose $N$ vertices, labelled $1,\dots,N$, are arranged along a circle and each vertex is connected with $d/2$ closest vertices clockwise and $d/2$ closest vertices counter-clockwise.
Also, recall that given two connectivities $d_1$, $d_2$ we shorthand
\[\fponed(d_1,d_2)=\fp(\Cir^{d_1}_N,\Cir^{d_2}_N).
\]
Figure~6 
from the main text suggests that when $d_1>d_2$ then the expression $\fponed(d_1,d_2)$ remains bounded away from 0 as $N\to\infty$,
Below we prove this in the special case $(d_1,d_2)=(2,1)$.
\begin{theorem}[1-D lattices]\label{thm:1d-lattices} We have
\[ 0.138 < \lim_{N\to\infty} \fponed(2,1) < 0.34.
\] 
\end{theorem}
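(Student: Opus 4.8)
The plan is to control the one-step drift of the mutant count $n$ and feed it into a gambler's-ruin comparison, handling the two inequalities by genuinely different arguments. Throughout, the mutant disperses to its two nearest vertices on each side (degree $4$) while residents disperse only to immediate neighbours (the cycle, degree $2$). First I would set up the drift. Write $B_1$ for the number of cyclically adjacent pairs of opposite type and $B_2$ for the number of opposite-type pairs at cyclic distance $2$. Since a reproducing individual is chosen with probability $1/N$ and sends its offspring uniformly among its neighbours, the probabilities of gaining or losing a mutant in one step are
\[ p^+ = \frac{B_1+B_2}{4N}, \qquad p^- = \frac{B_1}{2N}, \]
so the per-step bias is $p^+/p^- = (B_1+B_2)/(2B_1)$. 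For a single contiguous block one has $B_1=2$, $B_2=4$, giving the favourable ratio $3/2$; but fragmented states (e.g. the alternating pattern, where $B_2=0$) give ratio below $1$. This dichotomy is exactly why the two directions need separate treatments.

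For the upper bound I would prove the purely combinatorial inequality $B_2 \le 2B_1$, valid for every binary cyclic sequence of types $t_i$ because $[t_i\neq t_{i+2}] \le [t_i\neq t_{i+1}] + [t_{i+1}\neq t_{i+2}]$; summing cyclically gives it. Hence $p^+/p^- \le 3/2$, equivalently $p^-/p^+ \ge 2/3$, in every configuration with both types present. Setting $\gamma=2/3$, I would check that $\phi(n) = (1-\gamma^n)/(1-\gamma^N)$ is a supermartingale: its expected increment equals $\frac{(1-\gamma)\gamma^{n-1}}{1-\gamma^N}\,(\gamma p^+ - p^-)$, which is $\le 0$ precisely because $\gamma\le p^-/p^+$. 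Optional stopping at the absorption time then yields $\fponed(2,1) \le \phi(1) \to 1/3 < 0.34$ as $N\to\infty$.

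The lower bound is the hard direction, since no uniform positive bias exists and the same martingale cannot simply be reversed. Here I would exploit the skip channel highlighted in the main text. From a contiguous block the direct gains and losses at each interface are balanced (each of rate $\tfrac1{2N}$), but a boundary mutant can additionally jump over the adjacent resident at rate $\tfrac1{4N}$, enclosing a lone resident. That enclosed node has three mutant neighbours (in the mutant graph) able to recapture it, while the protruding mutant has only two resident neighbours (in the cycle) able to remove it; hence the node is refilled before the protrusion dies with probability bounded below by a constant $s>0$. I would package this as an embedded chain recording the block size at the successive times the configuration is a single contiguous block, show its up/down ratio is at least some $\rho \ge 1/(1-0.138)$, and conclude by gambler's ruin that the block reaches size $N$ with probability at least $1-1/\rho \ge 0.138$.

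The main obstacle is making this lower-bound coupling rigorous: one must guarantee that the process actually returns to near-contiguous configurations and bound the time spent in the intervening gappy states, where the count has negative drift and could otherwise drag the population to extinction. Quantifying the fill-versus-kill race and showing that fragmentation stays under control — so that the constant $s$ is genuinely bounded away from $0$ uniformly in $N$ — is the technical heart of the argument, and is where I expect most of the work to lie.
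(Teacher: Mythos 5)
Your upper bound is correct and takes a genuinely different route from the paper's. The paper bounds $\fponed(2,1)$ from above by restricting attention to trajectories that never exceed two mutants and solving for the extinction probabilities from the configurations $1$, $2$, $1\circ 1$, which gives $\fponed(2,1)\le 18/53$. Your argument --- the cyclic inequality $[t_i\neq t_{i+2}]\le [t_i\neq t_{i+1}]+[t_{i+1}\neq t_{i+2}]$ summed to give $B_2\le 2B_1$, hence $p^-/p^+\ge 2/3$ in \emph{every} mixed configuration, fed into the supermartingale $\phi(n)=(1-\gamma^n)/(1-\gamma^N)$ with $\gamma=2/3$ and optional stopping --- is sound (the increment computation checks out, and $p^+=(B_1+B_2)/4N$, $p^-=B_1/2N$ are the correct transition probabilities). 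It is uniform over the whole state space rather than only over small configurations, and it even yields the slightly stronger limiting bound $1/3<18/53$. This half of your proposal can stand as written.

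The lower bound, however, has a genuine gap, and it is the hard direction. You correctly identify the skip-over-the-boundary channel (this is also the paper's stated intuition), but the mechanism that converts it into the explicit constant $0.138$ is exactly the part you defer. Your embedded chain ``recorded at successive contiguous times'' is not under control: between two such times the configuration can fragment into states where, by your own inequality, the drift can be unfavourable; nothing guarantees the process returns to a single block rather than bleeding mutants or going extinct in between; and the embedded chain is not a nearest-neighbour walk, so an ``up/down ratio at least $\rho$'' is not even well defined without additional work. The paper resolves precisely this difficulty with a monotone coupling that your sketch lacks: it defines a dominated process $\Mlow$ which, whenever the true process leaves the two-parameter family of configurations $k$ and $k\circ 1$, \emph{deletes} mutants so as to land back inside that family. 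Since $\fp(C)\le\fp(C')$ whenever $C\subseteq C'$, the fixation probability of $\Mlow$ is a valid lower bound, and $\Mlow$ lives on a state space small enough for exact analysis: one then takes the potential $\varphi(k)=\alpha^k$, $\varphi(k\circ 1)=c\,\alpha^k$, chooses $\alpha\doteq 0.860$, $c\doteq 0.954$ so that $\varphi$ is a martingale for $\Mlow$ away from the single-mutant state, and obtains $a_1\to(2-\alpha-c)/(4-\alpha-c)>0.138$. Without this state-space-collapsing device (or an equivalent one), your fill-versus-kill race cannot be closed, and no numerical constant --- let alone $0.138$ --- follows from the proposal as it stands.
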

\begin{proof}
By a \textit{configuration} we mean a subset of vertices occupied by mutants.
We denote the possible configurations as a sequence of numbers (corresponding to blocks of consecutive mutants) and symbols ``$\circ$'' (corresponding to individual residents).
That is, for instance the notation $k\circ 1$ denotes a configuration with $k$ consecutive mutants, then one resident, then one more mutant (and residents before and after).

\smallskip\noindent\textit{Proof of the upper bound.\ }
This is straightforward. We say that a step of the Moran process is \textit{active} if it changes the configuration.
Given a single mutant, there are two possible active steps leading to immediate extinction (each occurs with probability equal to $\frac1N\cdot \frac12$). On the other hand, there are four possible active steps where mutants reproduce (each occurs with probability equal to $\frac1N\cdot\frac14$). Thus, in total, with probability $\frac{2/2}{2/2+4/4}=\frac12$ the first active step results in the mutant extinction, hence $\fponed(2,1)\le 1/2$.

Accounting for trajectories that never reach a configuration with more than $m$ mutants, we can push this upper bound lower:
For instance, taking $m=2$, denote by $x$, $y$, $z$ the extinction probabilities from configurations 1, 2, $1\circ1$, respectively.
Conditioning on the first active step (see~\cref{fig:circulations-proof}a) we obtain:
\[ x=\frac14(2\cdot 1+y+z),\quad y\ge \frac15(2x+3\cdot 0), \quad z\ge \frac17(4x+3\cdot 0),
\]
hence $x\ge \frac14(2+\frac25x + \frac47x)$. This rewrites as $x\ge 35/53$, hence $\fponed(2,1)\le 18/53=0.34$ for any $N\ge 5$.

\smallskip\noindent\textit{Proof of the lower bound.\ }
For $k\ge 1$ denote by $a_k$ the fixation probability from configuration $k$. 
Similarly, for $k\ge 2$ denote by $b_k$ the fixation probability from the configuration $k\circ 1$.
Then $a_0=0$ and
\[a_1=\frac{ \frac22a_0+\frac24a_2+\frac24b_2}{\frac22+\frac24+\frac24} = \frac14(a_2+b_2).
\]

\begin{figure}[h] 
	\centering
	\includegraphics[width=\linewidth]{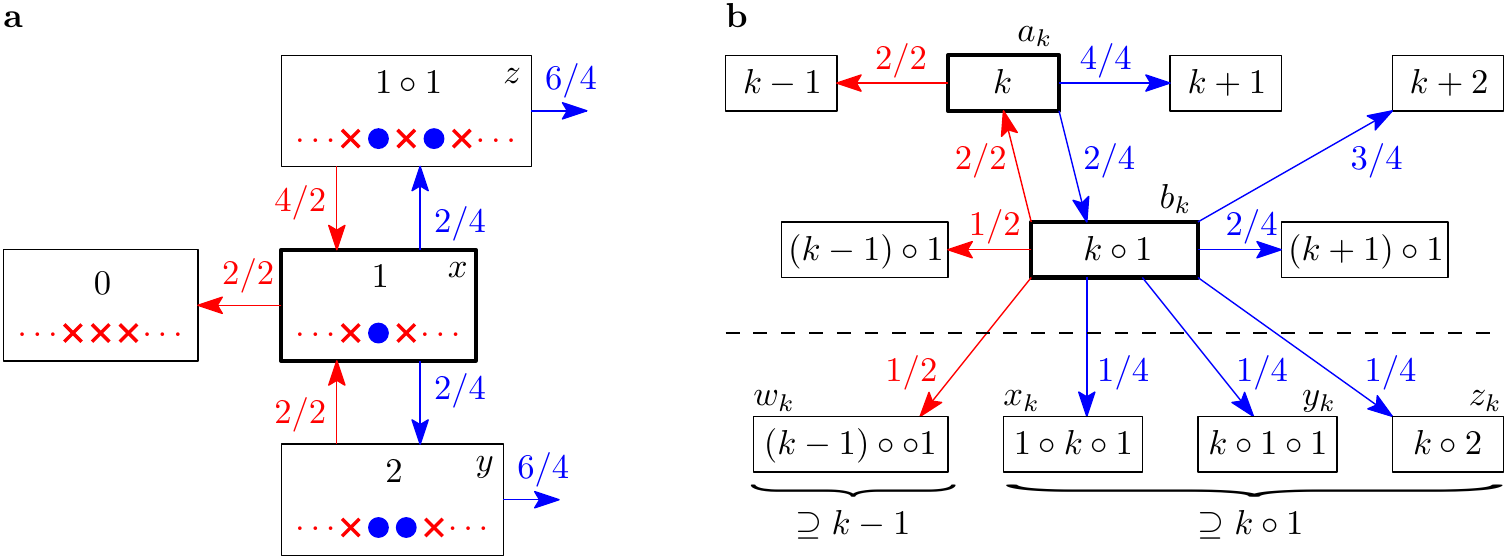}
        \caption{\textbf{Proof of~\cref{thm:1d-lattices}.}
        \textbf{a,}~Upper bound.
             Boxes represent configurations.
        Mutants are shown as blue disks, residents as red crosses.
        Blue (red) transitions correspond to mutants (residents) reproducing.
        A fraction $a/b$ denotes that there exist $a$ relevant edges, each pointing from a vertex with degree $b$.
         With a constant probability, the random evolutionary trajectory leads to mutant extinction without ever reaching a configuration with 3 or more mutants.
        \textbf{b,}~Lower bound.
         We consider a process $\Mlow$ where, any time a configuration below the dashed line is reached, we remove a mutant to instead reach one of the configurations above the dashed line.
         We then compute the fixation probability in $\Mlow$, which is a lower bound.
}
\label{fig:circulations-proof}
\end{figure}

For $k\ge 2$ we have (see~\cref{fig:circulations-proof}b)
\[a_k = \frac{\frac22 a_{k-1} + \frac24b_k + \frac44 a_{k+1}}{\frac22+\frac24+\frac44} = \frac15(2a_{k-1}+b_k+2a_{k+1}).
\]

For analogous expressions of $b_k$, we would need to introduce new variables $m_k$, $x_k$, $y_k$, $z_k$ to describe fixation probabilities from configurations listed below the dashed line.
Instead of computing the fixation probability exactly, we bound it by considering a different processes $\Mlow$, where mutants have lower fixation probability then in the original process $M$.
We define $\Mlow$ as follows:
The two processes coincide, except that when $M$ reaches any configuration below the dashed line,
we remove several mutants so as to obtain a configuration $a_i$ or $b_i$ (for some $i$), see~\cref{fig:circulations-proof}b.
For instance, when the current configuration is $k\circ1$, the resident from the gap is spawning an offspring, and the offspring moves one place left resulting in a configuration $(k-1)\circ\circ 1$, we additionally remove the single separated mutant, reaching the configuration $k-1$, where the mutants have fixation probability $a_{k-1}$.
Since for any two configurations $C\subseteq C'$ we have $\fp(C)\le \fp(C')$, the fixation probability in $\Mlow$ is indeed decreased, as compared to $M$.
(We note that by considering a process $\Mhigh$ where we occasionally add mutants so as to only visit configurations $a_i$, $b_i$, one can obtain a stronger upper bound than the one presented above.)

In $\Mlow$ we thus obtain

\begin{align*}
b_k&= \frac{ \frac22a_k + \frac12b_{k-1}+\frac12a_{k-1}  +  \frac34 b_k+\frac24b_{k+1}+\frac34a_{k+2} }{\frac22+\frac12+\frac12+\frac34+\frac24+\frac34} \\
&=\frac1{16}(2a_{k-1}+4a_k+3a_{k+2}  +  2b_{k-1}+3b_k+2b_{k+1})
\end{align*}

It remains to compute the fixation probability in $\Mlow$, in the limit $N\to\infty$. We do this by a standard argument.
Consider a ``potential function'' $\varphi$ which assigns a positive real number to each configuration, defined by $\varphi(k)=\alpha^k$, $\varphi(k\circ1)=c\cdot\alpha^k$, where $\alpha,c>0$ are positive real numbers (to be defined later).
By solving a system of 2 equations

\begin{align*}
5 &= 2/a + c + 2a \\
16c &=2/a+4+3a^2 + 2c/a+3c+2ca
\end{align*}
we find values $\alpha\doteq0.860$, $c\doteq0.954$ (the roots of certain degree-3 polynomials) for which the (expected) potential does not change in one step of the process
(that is, the function $\varphi$ is a martingale), except when at configuration $1$.
Now for any initial configuration $x\ne 1$, run the process starting from $x$ until it reaches either a configuration $1$ or the configuration $N$, and let $p_x$ be the probability that the former happens. Then
\[
 \varphi(x) = \E[\varphi(x)\mid \text{when the process ends}] = p_x\cdot\varphi(1) + (1-p_x)\cdot\varphi(N),
\]
which rewrites as
\[ p_x = \frac{\varphi(x)-\varphi(N)}{\varphi(1)-\varphi(N)}.
\]
Since $\alpha<1$ we have $\varphi(N)\to_{N\to\infty}0$, thus for $x\in\{2,1\circ1\}$ we can write
\[p_{2} \to_{N\to\infty} \frac{\alpha^2}{\alpha}=\alpha \quad\text{and}\quad p_{1\circ1} \to_{N\to\infty} \frac{c\cdot\alpha}{\alpha}=c.
\]
Now using the expressions
\[a_2 = p_2\cdot a_1 + (1-p_2)\cdot 1 \quad\text{and}\quad b_2 = p_{1\circ1}\cdot a_1 + (1-p_{1\circ1})\cdot 1.
\]
and plugging them into $a_1=\frac14(a_2+b_2)$ we finally obtain
\[ a_1 \to_{N\to\infty} \frac14\big(  (2-p_{2}-p_{1\circ1})  +  (p_{2}+p_{1\circ1})  \cdot a_1\big),
\]
hence
\[a_1 \to_{N\to\infty} \frac{2-p_{2}-p_{1\circ1} }{4-p_{2}-p_{1\circ1}}  =\frac{2-\alpha-c}{4-\alpha-c}>0.138. \qedhere
\]
\end{proof}

\end{document}